\newif\iffull
\newcommand{\GF}{{\sf GF}$(2)$\xspace}
\newcommand{\rank}{{\rm rank}\xspace}
\newcommand{\hdist}{d_H}
\newcommand{\NPC}{\textsf{NP}-complete\xspace}
\DeclareMathOperator{\operatorClassP}{{\sf P}}
\newcommand{\classP}{\ensuremath{\operatorClassP}}
\DeclareMathOperator{\operatorClassNP}{{\sf NP}}
\newcommand{\classNP}{\ensuremath{\operatorClassNP}}
\DeclareMathOperator{\operatorClassFPT}{{\sf FPT}\xspace}
\newcommand{\classFPT}{\ensuremath{\operatorClassFPT}\xspace}
\DeclareMathOperator{\operatorClassW}{{\sf W}}
\newcommand{\classW}[1]{\ensuremath{\operatorClassW[#1]}}
\newcommand{\bfx}{\mathbf{x}}
\newcommand{\OO}{\Oh}	
\newcommand{\R}{\mathcal{R}}	
\newcommand{\Oh}{\mathcal{O}}
\newcommand{\yesinstance}{yes-instance\xspace}	
\newcommand{\noinstance}{no-instance\xspace}
\newcommand{\pname}{\textsc}
\newcommand{\ProblemFormat}[1]{\pname{#1}}
\newcommand{\ProblemIndex}[1]{\index{problem!\ProblemFormat{#1}}}
\newcommand{\ProblemName}[1]{\ProblemFormat{#1}\ProblemIndex{#1}{}\xspace}
\newcommand{\probIS}{\ProblemName{Independent Set}}
\newcommand{\probPVC}{\ProblemName{Partial Vertex Cover}}
\newcommand{\probProjClR}{\ProblemName{Robust Projective Clustering}}
\newcommand{\probBFact}{\ProblemName{Low Boolean-Rank Approximation}}
\newlength{\RoundedBoxWidth}
\newsavebox{\GrayRoundedBox}
\newenvironment{GrayBox}[1]%
   {\setlength{\RoundedBoxWidth}{.93\textwidth}
    \def\boxheading{#1}
    \begin{lrbox}{\GrayRoundedBox}
       \begin{minipage}{\RoundedBoxWidth}}%
   {   \end{minipage}
    \end{lrbox}
    \begin{center}
    \begin{tikzpicture}%
       \node(Text)[draw=black!20,fill=white,rounded corners,%
             inner sep=2ex,text width=\RoundedBoxWidth]%
             {\usebox{\GrayRoundedBox}};
        \coordinate(x) at (current bounding box.north west);
        \node [draw=white,rectangle,inner sep=3pt,anchor=north west,fill=white] 
        at ($(x)+(6pt,.75em)$) {\boxheading};
    \end{tikzpicture}
    \end{center}}     
\newenvironment{defproblemx}[2][]{\noindent\ignorespaces%
                                \FrameSep=6pt%
                                \parindent=0pt%
                \vspace*{-2em}
                \ifthenelse{\isempty{#1}}{%
                  \begin{GrayBox}{\textsc{#2}}%
                }{%
                  \begin{GrayBox}{\textsc{#2} parameterized by~{#1}}%
                }
                \begin{tabular*}{\textwidth}{@{\hspace{.1em}} >{\itshape} p{1.1cm} p{0.85\textwidth} @{}}%
            }{
                \end{tabular*}%
                \vspace{-7mm}
                \end{GrayBox}%
                \ignorespacesafterend
            }
\newcommand{\defproblema}[3]{
  \begin{defproblemx}{#1}
    Input:  & #2 \\
    Task: & #3
  \end{defproblemx}
}%
 \newcommand{\probconstrainedcl}{\ProblemName{Constrained Clustering with Outliers}}
\newcommand{\probClusteringWRO}{\ProblemName{Feature Selection}}
\newcommand{\probClusteringWCO}{\ProblemName{$k$-Clustering with Column Outliers}}
\title{Parameterized Complexity of Feature Selection for Categorical Data Clustering}
\author{Sayan Bandyapadhyay}{Department of Informatics, University of Bergen, Norway}{Sayan.Bandyapadhyay@uib.no}{https://orcid.org/0000-0001-8875-0102}{}
\author{Fedor V. Fomin}{Department of Informatics, University of Bergen, Norway}{Fedor.Fomin@uib.no}{https://orcid.org/0000-0003-1955-4612}{}
\author{Petr A. Golovach}{Department of Informatics, University of Bergen, Norway}{Petr.Golovach@uib.no}{https://orcid.org/0000-0002-2619-2990}{}
\author{Kirill Simonov}{Algorithms and Complexity Group, TU Wien, Austria}{kirillsimonov@gmail.com}{https://orcid.org/0000-0001-9436-7310}{Supported by the Austrian Science Fund (FWF) via projects Y1329 (\emph{Parameterized Analysis in Artificial Intelligence}) and P31336 (\emph{New Frontiers for Parameterized Complexity}).}
\authorrunning{S. Bandyapadhyay, F. V. Fomin, P. A. Golovach, K. Simonov} 
\keywords{Robust clustering, PCA, Low rank approximation,
Hypergraph enumeration} 
\begin{document}
\maketitle


%
%
%
%

\begin{abstract}
We develop new algorithmic methods with provable guarantees for feature selection in regard to categorical data clustering. While feature selection is one of the most common approaches to reduce dimensionality in practice, most of the known feature selection methods are heuristics.  
We study the following mathematical model. We assume that there are some inadvertent (or undesirable) features of the input data that unnecessarily increase the cost of clustering. Consequently, we want to   select  a  subset of the original features from the data such that there is a small-cost clustering on the selected features. 
More precisely,  for given integers $\ell$ (the number of irrelevant features) and $k$ (the number of clusters), budget $B$, and a set of $n$ categorical data points (represented by $m$-dimensional vectors whose elements belong to a finite set of values $\Sigma$), we want to select $m-\ell$ relevant features such that the cost of any optimal $k$-clustering  on these features does not exceed $B$. Here the cost of a cluster is the sum of Hamming distances ($\ell_0$-distances) between the selected features of the elements of the cluster and its center.  The clustering cost is the total sum of the costs of the clusters.

 We use the framework of parameterized complexity to identify how the complexity of the problem depends on parameters $k$, $B$, and $|\Sigma|$. Our main result is an algorithm that solves the Feature Selection  problem in time $f(k,B,|\Sigma|)\cdot m^{g(k,|\Sigma|)}\cdot n^2$ for some functions $f$ and $g$. In other words, the problem is fixed-parameter tractable parameterized by $B$ when $|\Sigma|$ and   $k$ are constants. Our algorithm for Feature Selection is based on a solution to a more general problem, Constrained Clustering with Outliers. In this problem, we want to delete a certain number of outliers such that the remaining points could be clustered around centers satisfying specific constraints. One interesting fact about Constrained Clustering with Outliers is that besides Feature Selection, it encompasses many other fundamental problems regarding categorical data such as Robust Clustering, Binary and Boolean Low-rank Matrix Approximation with Outliers, and Binary Robust Projective Clustering.  Thus as a byproduct of our theorem, we obtain algorithms for all these problems. 
  We also complement our algorithmic findings with complexity lower bounds.
\end{abstract}

\newcommand{\probRLRMA}{\ProblemName{Robust $\ell_0$-Low Rank Approximation}}
\newcommand{\probLRMA}{\ProblemName{$\ell_0$-Low Rank Approximation}}
\newcommand{\probRBRMA}{\ProblemName{Robust Low Boolean-Rank Approximation}}
 \newcommand{\probAtMostClust}{\ProblemName{Binary $k$-Clustering}}
 
%
%


\section{Introduction}\label{sec:intro}
Clustering is one of the most fundamental concepts in data mining and machine learning.
A considerable challenge to the clustering approaches is the high dimensionality of modern datasets. When the data contains many irrelevant features (or attributes), an application of cluster analysis with a complete set of features could significantly decrease the solution's quality. The typical approach to overcome this challenge in practice is \emph{feature selection}. The method is based on selecting a small subset of relevant features from the data and applying the clustering algorithm only on the selected features. The survey of 
  \cite{SalemAlelyani14} provides a comprehensive overview on methods for feature selection in clustering. 
 Due to the significance of feature selection, there is a multitude of heuristic methods addressing the problem. However, very few provably correct methods 
 are known \cite{DBLP:conf/nips/BoutsidisMD09,BoutsidisZMD15,cohen2015dimensionality}.
 
 Kim et al.  \cite{DBLP:journals/ida/KimSM02} introduced a model of feature selection in the context of $k$-means clustering. We use their motivating example here. Decision-making based on market surveys is a pragmatic marketing strategy used by manufacturers to increase customer satisfaction. The respondents of a survey are segmented into similar-interest groups so that each group of customers can be treated in a similar way. Consider such a market survey data that typically contains responses of customers to a set of questions regarding their demographic and psychographic information, shopping experience, attitude towards new products and expectations from the business. The standard practice used by market managers to segment customers is to apply clustering techniques w.r.t. the whole set of features. However, depending on the application, responses corresponding to some of the features might not be relevant to find the target set of market segments. Also, some of the responses might contain incomplete or spurious information. To address this issue, Kim et al. \cite{DBLP:journals/ida/KimSM02} considered several quality criteria to return \textit{Pareto} optimal (or non-dominated) solutions that optimize one or more criteria. One such solution removes a suitable subset of features and cluster the data w.r.t. the remaining features.

The main objective of this work is to study clustering problems on  \textit{categorical} data. 
In statistics, a categorical variable is a variable that can admit  a fixed  number of possible values. 
For example, it could be a gender, blood type, political orientation,   etc. A prominent example of categorical data is binary data where the points are vectors each of whose coordinates can take value either 0 or 1. Binary data arise in several important applications. In electronic commerce, each transaction can be modeled as a binary vector (known as market basket data) each of whose coordinates denotes whether a particular item is purchased or not \cite{zhang2007binary,li2005general}. 
The most common similarity  (or dissimilarity) measure for categorical data objects is the Hamming distance, which is basically the number of mismatched attributes of the objects. 

\section{Our results}
In this paper, we introduce a new model of feature selection w.r.t. categorical data clustering, which is motivated by the work of Kim et al. \cite{DBLP:journals/ida/KimSM02}. We assume that there are some inadvertent features of the input data that unnecessarily increase the cost of clustering. Consequently, in our model,    
we define the best subset of features (of a given size) as the subset that minimizes the corresponding cost of clustering. The goal is   to compute such a subset and the respective clusters. We provide the first parameterized algorithmic and complexity results for feature selection in regard to   categorical data clustering.

 Let  $\Sigma$ be  a finite set of non-negative integers. We refer to $\Sigma$ as the alphabet and we  denote the $m$-dimensional space over $\Sigma$ by  ${\Sigma}^m$.
  Given two $m$-dimensional vectors $\bm{x},\bm{y} \in {\Sigma}^m$,   the \emph{Hamming distance} (or \emph{$\ell_0$-distance})  $d_H(\bm{x},\bm{y})$ is the number of different coordinates in $\bm{x}$ and $\bm{y}$, that is
$
  d_H(\bm{x},\bm{y})=|\{i \in \{1,\dots, m\} : \bm{x}[i] \ne \bm{y}[i]\}|.
  $
 For a set of indices $S\subset \{1,2,\ldots,m\}$ and an $m\times n$ matrix $\bm{A}$, let $\bm{A}^{-S}$ be the matrix obtained from $\bm{A}$ by removing the rows with indices in $S$. We denote the columns of $\bm{A}^{-S}$ by $\bm{a}_{-S}^j$ for $1\le j\le n$.
We consider the following mathematical model of feature selection.

\medskip
\defproblema{\probClusteringWRO}%
{An alphabet $\Sigma$, an $m\times n$ matrix $\bm{A}$ with columns $\bm{a}^1, \bm{a}^2,\ldots,\bm{a}^n$ such that $\bm{a}^j\in {\Sigma}^m$ for all $1\le j\le n$, a positive integer $k$, non-negative integers $B$ and $\ell$. }%
{Decide whether there is a subset $O \subset \{1, 2, \dots, m\}$ of size at most $\ell$, a partition $\{I_1,I_2,\ldots,I_k\}$ of $\{1,2,\ldots,n\}$, and vectors $\bm{c}_1,\bm{c}_2,\ldots,\bm{c}_k \in {\Sigma}^{m-\vert O\vert}$ such that \[\sum_{i=1}^k \sum_{j \in I_i} d_H(\bm{a}_{-O}^j,\bm{c}_i)\le B.\]}

 In the above definition and all subsequent problem definitions, without loss of generality, we assume that each cluster is non-empty, i.e.,  $I_i\ne \emptyset$  for each $1\le i\le k$. Note that, otherwise, one could probe different values $k' < k$ for the actual number of non-empty clusters. The problem is defined as a decision problem, however if the instance is a  \yesinstance we would also like to find such a clustering. 
For $\ell=0$, \probClusteringWRO is the popular \probAtMostClust problem, which is known to be 
  \classNP-hard for every  $k\geq 2$~\cite{Feige14b}. This makes it natural to investigate the parameterized complexity of \probClusteringWRO. 
 
 Our main result is the following theorem.    
\begin{restatable}{theorem}{thmwro}\label{theorem:clustering_outliers}
\probClusteringWRO  is solvable  in time $f(k,B,\vert \Sigma\vert)\cdot m^{g(k,\vert \Sigma\vert)}\cdot n^2$, where $f$ and $g$ are computable functions. 
\end{restatable}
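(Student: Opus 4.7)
The plan is to reduce \probClusteringWRO to the more general Constrained Clustering with Outliers problem promised by the abstract, and then to solve the latter. The starting structural observation is that in any solution of total cost at most $B$, at most $B$ of the $n$ columns can have positive Hamming distance from their assigned cluster center on the kept rows; call these at most $B$ columns \emph{exceptional}. Every non-exceptional column in cluster $i$ therefore restricts on the kept rows to exactly $\bm{c}_i$, so on that projection the matrix $\bm{A}^{-O}$ has at most $k+B$ distinct columns; moreover, for each cluster $i$ with strictly more than $B$ members the center $\bm{c}_i$ is literally the restriction of some input column to the kept rows, while clusters of size $\le B$ have to be handled separately by brute force.

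Next I would guess the ``deviation profile'': the identities of the exceptional columns, the partition of the budget $B$ among them, and for each exceptional column both the coordinates where it deviates from its center and the center's value at those coordinates. Since there are at most $B$ exceptional columns and each contributes at most $B$ coordinate–value deviations from an alphabet of size $|\Sigma|$, the number of deviation profiles is bounded by a function of $B$ and $|\Sigma|$ \emph{once the identity of the exceptional columns is fixed}. Treating the exceptional columns as outliers whose cost is already prepaid, and the non-exceptional columns as points that must cluster exactly around centers of a constrained form, I obtain an instance of Constrained Clustering with Outliers.

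The remaining core problem is then: partition the non-exceptional columns into $k$ groups and choose a set $O$ of at most $\ell$ rows so that columns inside a group agree on every row outside $O$ while each outlier's prescribed deviation is consistent with its group's center on the kept rows. The within-group agreement condition has a hypergraph flavour: for any two columns $j, j'$ in the same group, the \emph{disagreement set} $D_{j, j'} = \{r : \bm{A}[r,j] \ne \bm{A}[r,j']\}$ must be contained in $O$. I would solve this by enumerating the multiplicity vector of row-types in $\Sigma^k$ among the kept rows (these multiplicity vectors number roughly $m^{|\Sigma|^k}$, yielding the $m^{g(k,|\Sigma|)}$ factor), and for each such multiplicity vector matching rows to types and completing $O$ via a hypergraph-enumeration procedure of the sort alluded to in the keywords.

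The main obstacle is reconciling the guessing of exceptional columns with the $n^2$ polynomial dependence promised by the statement: naively enumerating a set of up to $B$ exceptional columns costs $n^B$, which is much too expensive. Avoiding this requires the Constrained Clustering with Outliers algorithm itself to absorb the identification of the exceptional columns, paying only an $n^2$ factor (most plausibly through a representative-column or coreset argument), while all combinatorial enumeration over deviation patterns, row types, and type-to-cluster assignments is absorbed into $f(k,B,|\Sigma|)$. Establishing such a reduction, and then verifying that the Constrained Clustering with Outliers algorithm meets the claimed bound on its own, is where the bulk of the technical work lies.
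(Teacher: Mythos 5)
Your proposal has genuine gaps at the two places where the actual work happens, and you flag one of them yourself. The first missing idea is the reduction itself. The paper's Lemma~\ref{lem:ROsplcase} works by \emph{transposing} the matrix: the $\ell$ deleted rows (features) of $\bm{D}$ become $\ell$ deleted columns of $\bm{A}=\bm{D}^{T}$, and the $k$-clustering of the columns of $\bm{D}$ is encoded as a $|\Sigma|^{k}$-clustering of the columns of $\bm{A}$ (i.e.\ of the rows of $\bm{D}$), where each row is assigned to a cluster according to the $k$-tuple of values the $k$ row-centers take on it; the relations $\R$ (each $R_i$ equal to the set of $k$ columns of the $|\Sigma|^{k}\times k$ matrix listing all length-$k$ strings over $\Sigma$) force the $|\Sigma|^{k}$ column-centers to be readable off from $k$ legitimate row-centers. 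This single move decouples feature selection from clustering and produces a genuine instance of \probconstrainedcl. Your version never achieves this decoupling: after you set aside the ``exceptional'' columns, you are still left with the task of simultaneously choosing $O\subset\{1,\dots,m\}$ and partitioning the columns so that they agree off $O$ --- which is essentially the original problem, not an instance of \probconstrainedcl as defined (there the outliers are columns that are simply discarded, and no row selection occurs). Your worry about an $n^{B}$ enumeration of exceptional columns is an artifact of this non-working reduction; in the paper's route there is no such enumeration, and the $n^{2}$ factor is just the $m^{2}$ term of Theorem~\ref{theorem:main} applied to the transposed $n\times m$ instance, while the $m^{g(k,|\Sigma|)}$ factor is the $n^{O(k')}$ term with $k'=|\Sigma|^{k}$.

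The second gap is the algorithm for \probconstrainedcl itself, which you leave entirely to ``a hypergraph-enumeration procedure of the sort alluded to in the keywords.'' The paper's algorithm guesses a $k'$-tuple $x$ of representative columns, one from each optimal cluster, builds a hypergraph on the row indices whose edges are the disagreement sets between $x$ and $k'$-tuples drawn from the Cartesian product $T$ of the optimal clusters, and proves the structural lemma that the at most $B$ positions where $x$ deviates from the optimal center tuple induce a subhypergraph of fractional edge cover number at most $2$ (relaxed to $4$ after subsampling to $O(\log B)$ edges). That bound --- obtained by the exchange argument that if fewer than half the tuples of $T$ disagreed with $x$ at a position then replacing the center row by $x$'s row would reduce the cost --- is exactly what makes Marx's enumeration theorem run in FPT time and is the technical heart of the result. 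Without the transposition and without this structural lemma, the proposal does not constitute a proof.
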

 
In particular, this implies that for fixed $k$ and $|\Sigma|$, the problem is fixed-parameter tractable 
(\classFPT\footnote{A problem is \classFPT or fixed-parameter tractable parameterized by a set of parameters if it can be solved by algorithms that are exponential only in the values of the parameters while polynomial in the size of the input.}) 
parameterized by $B$. Note that in any study concerning the parameterized complexity of a problem, the value of the  parameter is implicitly assumed to be sufficiently small compared to the input size.  
Although the parameter $B$ seems to be a natural choice from the problem definition, in general $B$ can be fairly large. Hence, Theorem~\ref{theorem:clustering_outliers} is mostly   applicable in the scenario when for the selected features the cost  of clustering $B$ is small and thus the points are well-clustered on the selected features. Even in this case the problem is far from being trivial.  
 One can think of our problem as a problem from the broader class of editing problems, where the goal is to check whether a given instance is close to a ``structured'' one. In particular, our problem can be seen as the problem of editing the input matrix after removing at most $\ell$ rows such that the resulting matrix contains at most $k$ distinct columns and the number of edits does not exceed the budget $B$. In this sense, our work is in line with the work  of 
 \cite{GanianKOS20} on matrix completion and \cite{FominGP20} on clustering. Note that in many applications it is reasonable to assume that $k$ and $|\Sigma|$ are bounded, as the alphabet size and the number of clusters are small in practice. Indeed, for binary data, $|\Sigma|=2$.  
 

Another interesting property of our algorithm is that the running time does not depend on the number of irrelevant features $\ell$. In particular, for fixed $k$, $B$, and $|\Sigma|$, it runs in polynomial time even when $\ell=\Omega(m)$. Also, the theorem could be used to identify the minimum number of irrelevant features $\ell$ such that the cost of $k$-clustering on the remaining features does not exceed $B$. Note that our time complexity also exponentially depends on the number of clusters $k$. In this regard, one can compare our result with the result in \cite{FominGP20} that shows that the binary version of the problem with $\ell=0$  (\probAtMostClust) is \classFPT parameterized by $B$ only. However, in the presence of irrelevant features, the dependence on $k$ is unavoidable as we state in our next theorem. 

\begin{restatable}{theorem}{thmhardness}
  \probClusteringWRO is $\classW1$-hard parameterized by 
  \begin{itemize}
\item  either $k + (m - \ell)$
\item or $\ell$
  \end{itemize}
 even when $B=0$ and $\Sigma=\{0,1\}$. Moreover, assuming the Exponential Time Hypothesis (ETH), the problem cannot be solved in time  $f(k)\cdot m^{o(k)}\cdot n^{\Oh(1)}$ for any function $f$, even when 
 $B = 0$ and the alphabet $\Sigma$ is binary.
    \label{thm:w1r}
\end{restatable}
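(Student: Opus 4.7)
My plan is to establish all three lower bounds via polynomial-time reductions from the $k$-\probClique problem (and its multicolored variant), relying on the facts that $k$-\probClique is $\classW1$-hard parameterized by $k$ and admits no $f(k)\cdot N^{o(k)}$-time algorithm on $N$-vertex graphs under ETH. In the regime $B=0$ with $\Sigma=\{0,1\}$, our problem reduces to the purely combinatorial question of deleting at most $\ell$ rows from a binary matrix so that at most $k$ distinct column patterns remain; both reductions will operate entirely within this regime.

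For the $k+(m-\ell)$-parameterization and the ETH lower bound I would construct a single reduction from $k$-\probMultiClique. Given a graph $G$ with color classes $V_1,\dots,V_k$ of size $N$, the output is a binary matrix of dimensions polynomial in $N$ with $k_{FS}=\Theta(k)$ and $m_{FS}-\ell_{FS}=\Theta(k)$, in which the surviving rows ``select'' one vertex from each color class (through a block of rows per class) and the $k$ cluster centers correspond to the $k$ chosen vertices. Rows that are not designated as selectable would encode adjacency/non-adjacency between color classes in such a way that inconsistent (non-clique) selections strictly inflate the number of distinct column patterns on the kept coordinates beyond $k$. Because $k_{FS}$ is \emph{linear} in $k$ and $m_{FS}=\poly(N)$, an $f(k)\cdot m^{o(k)}\cdot n^{O(1)}$ algorithm for \probClusteringWRO would translate into an $N^{o(k)}$ algorithm for $k$-\probMultiClique, contradicting ETH; the same construction simultaneously witnesses $\classW1$-hardness for the combined parameter $k+(m-\ell)$.

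The $\ell$-only parameterization requires a separate reduction, since the first has $\ell_{FS}$ large. Here I would reduce from $k$-\probClique on an $N$-vertex graph by letting $\ell_{FS}=\Theta(k)$ play the role of the clique itself: rows are indexed so that deleting a carefully chosen set of $\Theta(k)$ rows collapses large families of columns into common patterns, while $k_{FS}$ is allowed to grow polynomially in $N$ to absorb the leftover variation. The main obstacle in either construction is the rigidity imposed by $B=0$: every column must \emph{exactly} match its cluster center on the surviving coordinates, so each gadget has to be argued not merely to admit the intended clique solution but to exclude unintended ones, in both directions of the equivalence. Keeping $k_{FS}$ tightly linear in $k$ for the ETH reduction (rather than a coarser polynomial blow-up) further restricts the admissible encoding size, and is the most delicate point in the design.
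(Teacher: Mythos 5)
Your high-level framing is right -- at $B=0$ over $\{0,1\}$ the problem is exactly ``delete at most $\ell$ rows so that at most $k$ distinct columns remain,'' and both bounds should come from clique-family reductions -- but what you have is a plan whose central gadgets are missing, and the missing part is where the actual difficulty lies. The paper's first reduction is from \probIS, not \probMultiClique: the matrix is simply the incidence matrix of $G$ (rows $=$ vertices, columns $=$ edges), with $k = t+1$, $\ell = |V|-t$, after a preprocessing step that attaches a $(t+2)$-clique joined to everything. The point of that choice of source problem is a polarity issue your sketch does not confront: in any incidence-style encoding, an \emph{extra} distinct column pattern is created by an edge that is \emph{present} inside the kept set (it survives as a column with two ones), whereas a missing edge creates nothing. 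So the construction naturally penalizes edges among the selected vertices, i.e.\ it forces an \emph{independent set}; to make ``inconsistent (non-clique) selections strictly inflate the number of distinct column patterns'' you would need a genuinely different gadget (or to pass to the complement graph), and you give neither. A second problem is your phrase ``rows that are not designated as selectable would encode adjacency'': in \probClusteringWRO the solver chooses freely which $m-\ell$ rows to keep, so you cannot designate some rows as mandatory checkers and others as selectors. The paper's construction is uniform over rows precisely for this reason, and the added universal clique guarantees that for \emph{every} $t$-subset of kept rows the zero column and all $t$ single-one columns already occur, which is what makes the reverse direction of the equivalence go through. That reverse direction is the step you explicitly defer as ``the most delicate point,'' and it is not routine.

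For the $\ell$-parameterization your intuition (deleting few rows should collapse many columns) is the right one, and the paper realizes it by reducing from \probPVC on a padded graph, again via the incidence matrix: removing the $t'$ rows of a good partial vertex cover leaves at most $1 + (|V'|-t') + (|E'|-q')$ distinct columns. Reducing directly from \probClique as you propose would essentially force you to re-derive the \probClique-to-\probPVC reduction and then still carry out the careful column-counting (including the padding vertices $P$ and $D$ that make the count tight in both directions), none of which appears in your sketch. As written, neither reduction is specified concretely enough to check correctness, and the one design decision you do commit to (encoding so that non-cliques inflate the pattern count) points in the wrong direction for the natural encoding.
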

%
%

Note that when $B=0$ and $\Sigma=\{0,1\}$, from Theorem \ref{theorem:clustering_outliers} it follows that \probClusteringWRO   can be solved in time $f(k)\cdot m^{g(k)}\cdot n^2$.  Theorem \ref{thm:w1r} shows that the dependence of such a  function $g$ on $k$ is inevitable, unless $\classW{1}=\classFPT$, and $g(k)$ is unlikely to  be sublinear up to ETH. 

In order to prove Theorem~\ref{theorem:clustering_outliers}, we prove 
%
 a more general theorem about 
\probconstrainedcl. In this problem, one seeks a clustering with centers of clusters satisfying the property imposed by a set of relations.
 Constrained clustering  \cite{FominGLP020} was introduced  as the tool in the design of approximation algorithms for binary low-rank approximation problems. The \probconstrainedcl problem is 
  basically the robust variant of this problem. 
As we will see,  by the reduction given in Lemma~\ref{lem:ROsplcase},  Theorem~\ref{theorem:main} proves Theorem~\ref{theorem:clustering_outliers}.  
Besides  \probClusteringWRO, \probconstrainedcl
encompasses a number of well-studied problems  around robust clustering, low-rank matrix approximation, and dimensionality reduction. Our algorithm for constrained clustering implies fixed-parameter tractability  for all these problems.

To define constrained clustering, we need a few definitions. 
 A \emph{$p$-ary relation} on $\Sigma$ is a collection of $p$-tuples whose elements are in $\Sigma$. 

\begin{definition}[\textbf{Vectors satisfying $\R$}] An ordered  set $C=\{\bm{c}_1,\bm{c}_2,\ldots,\bm{c}_p\}$ of $m$-dimensional vectors in ${\Sigma}^m$ is said to satisfy a set $\R=\{R_1,R_2,\ldots,R_m\}$ of $p$-ary relations on ${\Sigma}$ if for all $1\le i\le m$, 
the $p$-tuple formed by the $i$-th coordinates of vectors from $C$, that is $(\bm{c}_1[i],\bm{c}_2[i],\ldots,\bm{c}_p[i])$,  belongs to $R_i$. 
\end{definition}


  We define the following constrained variant of robust categorical clustering.   

  
  \medskip
\defproblema{\probconstrainedcl}
{An alphabet $\Sigma$, an $m\times n$ matrix $\bm{A}$ with columns $\bm{a}^1, \bm{a}^2,\ldots,\bm{a}^n$ such that $\bm{a}^j\in {\Sigma}^m$ for all $1\le j\le n$, a positive integer $k$, non-negative integers $B$ and $\ell$, a set $\R=\{R_1,R_2,\ldots,R_m\}$ of $k$-ary relations on $\Sigma$.}%
{Decide whether there is a subset $O \subset \{1, 2, \dots, n\}$ of size at most $\ell$, a partition $\mathcal{I}=\{I_1,I_2,\ldots,I_k\}$ of $\{1,2,\ldots,n\} \setminus O$, and a set $C=\{\bm{c}_1,\bm{c}_2,\ldots,\bm{c}_k\}$ of $m$-dimensional vectors in ${\Sigma}^m$ such that $C$ satisfies $\R$ and \[\sum_{i=1}^k \sum_{j \in I_i} d_H(\bm{a}^j,\bm{c}_i)\le B.\]}

Thus in \probconstrainedcl we want to identify a set of outliers $\bm{a}^i, i\in O$, such that the remaining 
  $n-\ell$  vectors could be  partitioned into $k$ \emph{clusters} $\{I_1,I_2,\ldots,I_k\}$. Each cluster $I_j$ could be  identified by its \emph{center} 
  $\bm{c}_j\in {\Sigma}^m$ as the set of vectors that are closer to   $\bm{c}_j\in {\Sigma}^m$ than to any other center (ties are broken arbitrarily).  Then the cost of 
 each cluster $I_j$ is  the  sum of the Hamming distances between its vectors and the corresponding center $\bm{c}_j\in {\Sigma}^m$.
 However, there is an additional condition that the set of cluster centers $C=\{\bm{c}_1,\bm{c}_2,\ldots,\bm{c}_k\}$ must satisfy the set of $k$-ary relations $\R$. And, the  total sum of costs of all clusters must not exceed $B$.
We prove the  following theorem. 
  \begin{restatable}{theorem}{thmmain}\label{theorem:main}
      \probconstrainedcl is solvable  in time
      \vspace{-2pt}
      \[(kB)^{O(kB)} {\vert \Sigma\vert}^{kB}\cdot n^{O(k)}\cdot m^2.\]
\end{restatable}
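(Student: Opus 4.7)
The plan is a two-layer enumeration: first guess one \emph{representative} column per cluster, then determine the center matrix via a bounded enumeration whose size is independent of $m$. The first step rests on a cheap-representative lemma: for any feasible solution $(O,\{I_j\},\{\bm{c}_j\})$, picking in each cluster $I_j$ a column $\bm{a}^{p_j}$ minimizing $d_H(\bm{a}^{p_j},\bm{c}_j)$ yields, by the minimum-at-most-average inequality,
\[
  \sum_{j=1}^k d_H(\bm{a}^{p_j},\bm{c}_j) \;\le\; \sum_{j=1}^k \frac{1}{|I_j|}\sum_{q\in I_j} d_H(\bm{a}^q,\bm{c}_j) \;\le\; \sum_{j=1}^k \cost(I_j) \;\le\; B.
\]
Arranging the $\bm{a}^{p_j}$ as columns of an $m\times k$ matrix $\bm{P}$ and the $\bm{c}_j$ as columns of $\bm{C}$, this says that $\bm{P}$ and $\bm{C}$ differ in at most $B$ entries. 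In particular, the set of disagreeing rows $D=\{i:\bm{P}_i\neq\bm{C}_i\}$ satisfies $|D|\le B$ and must contain the \emph{forced} rows $D_0=\{i:\bm{P}_i\notin R_i\}$.

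The algorithm then enumerates all $O(n^k)$ ordered $k$-tuples of columns as candidates for $(p_1,\ldots,p_k)$; one of them is the cheap tuple coming from any optimal solution. For each guess, $D_0$ is computed in $O(m)$ time and the guess is discarded if $|D_0|>B$. Otherwise we enumerate $\bm{C}_i\in R_i$ for every $i\in D_0$, giving $|\Sigma|^{k|D_0|}\le|\Sigma|^{kB}$ options. The crux is to enumerate the \emph{optional} disagreements $D\setminus D_0$: rows where $\bm{P}_i\in R_i$ already, yet it pays to change $\bm{C}_i$ to improve the clustering cost. A naive enumeration over subsets of rows costs $\binom{m}{B}=\Omega(m^B)$, which exceeds the $m^2$ budget. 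Instead, I would use a bounded-depth branching: at each step, identify some row $i$ where changing $\bm{C}_i$ to a different tuple in $R_i$ strictly reduces the current clustering cost, branch on the choice of alternative tuple (at most $|\Sigma|^k$ options), and charge the corresponding units of the remaining budget. Careful bookkeeping of the branching parameters yields a search tree of size $(kB)^{O(kB)}|\Sigma|^{kB}$. For each resulting full $\bm{C}$, assign every column to its nearest center, discard the $\ell$ farthest as outliers, and check whether the total cost is at most $B$; with Hamming distances to the representatives precomputed once per representative-guess and $O(B)$-time updates per changed row, the evaluation amortizes to $O(m^2)$ per candidate.

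The main obstacle is the branching step: the ``bad row'' and the alternative tuple must be chosen so that (i) the branching factor does not depend on $m$ and (ii) the branching is exhaustive over all optimal center matrices. Concretely, I expect this to require a swap lemma showing that branches differing only in the row-index (but agreeing on the modified cluster and new values) lead to equivalent sub-problems, so it suffices to branch over (cluster, alternative value) pairs rather than (row, cluster, alternative value) triples, together with a potential-function argument bounding the recursion depth by $O(B)$. Handling the two-way coupling between the center matrix and the induced nearest-center clustering — so that the clustering re-evaluation across branches can be maintained incrementally — is where the main technical care is needed, and is what ultimately drives the $m^2$ (rather than $m$) factor in the final running time.
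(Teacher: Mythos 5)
Your first stage is sound and matches the paper: guessing one representative column per cluster costs $n^{O(k)}$, the representative tuple differs from the optimal center tuple $C$ in at most $B$ rows (the paper uses the even simpler bound $d_H(\bm{x}_j,\bm{c}_j)\le \cost(I_j)$ for an arbitrary member of $I_j$, so the minimum-at-most-average refinement is not even needed), and the rows where the representatives violate $\R$ can be repaired for free because those rows are already forced to be in the disagreement set. The final evaluation step (assign to nearest center, drop the $\ell$ farthest columns) is also the same as the paper's greedy assignment.

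The gap is exactly where you flag "the crux": enumerating the optional disagreement rows $D\setminus D_0$. Your bounded-depth branching has two unresolved problems, and I do not believe either can be fixed along the lines you sketch. First, the hoped-for swap lemma is false: two rows $i\ne i'$ with identical representative tuples $\bm{P}_i=\bm{P}_{i'}$ and identical relations $R_i=R_{i'}$ still sit on top of different rows of the data matrix $\bm{A}$, so changing $\bm{C}_i$ versus $\bm{C}_{i'}$ has entirely different effects on the clustering cost; branches that agree on the (cluster, new value) pair but differ in the row index are genuinely different subproblems, and collapsing them breaks exhaustiveness. So you are stuck branching on the row index, which reintroduces a factor of $m$ per level and a total of $m^{\Omega(B)}$. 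Second, the scheme presumes a local-search property — that the optimal $C$ is reachable from the refined representative matrix by a sequence of single-row modifications each of which \emph{strictly decreases} the current cost. This need not hold: because the induced assignment of columns to clusters and to the outlier set depends on all rows of $C$ simultaneously, an intermediate single-row change can be cost-neutral or cost-increasing even though the full set of changes is an improvement, so a branching rule that only follows strictly improving moves can miss the optimum. The paper's resolution of precisely this difficulty is the hypergraph machinery: it builds a hypergraph on $\{1,\ldots,m\}$ whose edges record, for each $k$-tuple of columns, the rows where that tuple disagrees with the representative tuple, proves a structural lemma that the true disagreement set $P$ induces (via the tuples in the product of the optimal clusters) a subhypergraph of fractional cover number at most $2$ — via the argument that otherwise replacing $C[h]$ by $x[h]$ would strictly decrease the cost — then sparsifies to $O(\log B)$ edges by sampling, and finally invokes Marx's subhypergraph enumeration theorem to list all candidate sets $P$ in time $B^{O(B)}n^{O(k)}m^2$. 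Some substitute for this enumeration of candidate row sets with running time polynomial in $m$ is the essential missing ingredient in your proposal.
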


      \vspace{-2pt}

The connection between \probconstrainedcl and \probClusteringWRO is established in the following lemma. 

 \begin{restatable}{lemma}{lemROsplcase}\label{lem:ROsplcase}
 For any instance $(\bm{D},k,B,\ell)$ of \probClusteringWRO, one can construct in time $\OO(mn+k\cdot |\Sigma|^k)$ an equivalent instance $(\bm{A},k',B',\ell',\R)$ of \probconstrainedcl such that $\bm{A}$ is the transpose of $\bm{D}$, $k'={\vert\Sigma\vert}^{k}$, $B'=B$ and $\ell'=\ell$. 
 \end{restatable}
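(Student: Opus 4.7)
The plan is to make the transpose explicit and to encode, via one carefully chosen $|\Sigma|^k$-ary relation shared by every row, the original freedom of choosing a $k$-partition of the data points. Set $\bm{A}:=\bm{D}^\top$, $k':=|\Sigma|^k$, $B':=B$, $\ell':=\ell$; this part takes $O(mn)$ time. Identify the $k'$ cluster labels in the new instance with the tuples $t=(t_1,\ldots,t_k)\in\Sigma^k$ in some fixed enumeration, and for each $i\in\{1,\ldots,k\}$ let the \emph{projection} $\pi_i\in\Sigma^{k'}$ be defined by $\pi_i[t]:=t_i$. Set $R_j:=\{\pi_1,\ldots,\pi_k\}$ for every $j\in\{1,\ldots,n\}$. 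All $R_j$ are identical and can be built once in time $O(k\cdot|\Sigma|^k)$, so the total construction cost is $O(mn+k\cdot|\Sigma|^k)$ as required.

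The crucial property is that the projections $\pi_1,\ldots,\pi_k$ are pairwise distinct (we may assume $|\Sigma|\ge 2$), so a vector $v\in\Sigma^{k'}$ lying in $R_j$ pins down a unique index $\sigma(j)\in[k]$. Consequently, in any feasible solution of the \probconstrainedcl instance the $j$-th coordinates of the $k'$ centers must equal $\bigl(t_{\sigma(j)}\bigr)_{t\in\Sigma^k}$ for some function $\sigma\colon[n]\to[k]$, and conversely any such $\sigma$ realizes feasible centers. I will exploit this bijective correspondence to translate solutions both ways.

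Forward direction: from a solution $(O,\{I_1,\ldots,I_k\},\bm{c}_1,\ldots,\bm{c}_k)$ to \probClusteringWRO, define $\sigma(j):=i$ iff $j\in I_i$, keep $O$ as the outlier set, assign each surviving column $r\in[m]\setminus O$ of $\bm{A}$ to cluster $t(r):=(\bm{c}_1[r],\ldots,\bm{c}_k[r])$, and take $\bm{c}_t[j]:=t_{\sigma(j)}$. A direct expansion gives
\[
\sum_{r\notin O} d_H(\bm{a}^r,\bm{c}_{t(r)}) \;=\; \sum_{r\notin O}\sum_{j=1}^{n}\bigl[\bm{D}[r,j]\ne \bm{c}_{\sigma(j)}[r]\bigr] \;=\; \sum_{i=1}^{k}\sum_{j\in I_i} d_H(\bm{d}_{-O}^{\,j},\bm{c}_i),
\]
so the transposed solution stays within budget $B$. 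Backward direction: from a solution $(O',\{I'_t\}_t,\{\bm{c}_t\}_t)$ of the constructed instance, read $\sigma$ off the centers via the $R_j$ constraints, put $I_i:=\sigma^{-1}(i)$, and set $\bm{c}_i[r]:=t(r)_i$ where $t(r)$ is the cluster label of column $r$; the same identity runs in reverse.

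The one bookkeeping subtlety to flag is the paper's convention that clusters be non-empty: either direction may leave some of the $|\Sigma|^k$ cluster labels (or some $I_i$) empty. Since the paper already permits probing the actual number of non-empty clusters by decreasing $k$, discarding empty parts preserves both feasibility and the cost equality above, and this is the only real obstacle the proof has to navigate.
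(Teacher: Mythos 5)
Your construction is exactly the paper's: transpose the matrix, take $k'=|\Sigma|^k$ cluster labels indexed by tuples in $\Sigma^k$, and let every row's relation be the set of $k$ coordinate projections (the paper builds these as the columns of the $|\Sigma|^k\times k$ matrix listing all strings of $S$), so that each center tuple at a row encodes which of the original $k$ clusters that data point joins. The bidirectional translation via the "type" $\sigma(j)$ read off the centers and the symmetric cost identity match the paper's argument, and you handle the empty-cluster caveat at the same level the paper does, so this is correct and essentially the same proof.
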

  
 Theorem~\ref{theorem:clustering_outliers}  follows from Theorem~\ref{theorem:main}   and Lemma~\ref{lem:ROsplcase}.  Connections of constrained clustering with several other clustering and low-rank matrix approximation problems have been established in the literature \cite{FominGLP020}. Similarly,  Theorem~\ref{theorem:main} allows to design parameterized algorithms for robust variants of these  problems.

 \medskip\noindent\textbf{Robust low-rank matrix approximation.}  Here we discuss two problems where for a given matrix of categorical data, we seek to remove $\ell$ columns such that the remaining columns are well approximated by a matrix of small rank.  
  The vanilla case of the \textsc{$\ell_0$-Low Rank Approximation} problem is the following.  Given an    $m\times n$ matrix $\bm{A}$ over  $\operatorname{GF}(p)$ (a finite field of order $p$),   
the task is to find an    $m\times n$ matrix $\bm{B}$  over  $\operatorname{GF}(p)$ of GF($p$)-rank at most $r$  which  is  closest to  $\bm{A}$ in the \emph{$\ell_0$-norm}, i.e., the goal is to minimize
$\|\bm{A}-\bm{B}\|_0$, the number of different entries in $\bm{A}$ and $\bm{B}$.
In the robust version of this problem, some of the columns of $\bm{A}$ could be outliers, which brings us to the following problem.

\medskip
 \defproblema{\probRLRMA}%
{An   $m\times n$ matrix $\bm{A}$ over  $\operatorname{GF}(p)$, a positive integer $r$, non-negative integers $B$ and $\ell$. }%
{Decide whether there is a  matrix $\bm{B}$ of GF(p)-rank at most $r$, and a  matrix $\bm{C}$  over  $\operatorname{GF}(p)$ with at most $\ell$ non-zero columns  such that $\|\bm{A}-\bm{B} -\bm{C} \|_0\leq B$.

}

Note that in this definition the non-zero columns of $\bm{C}$ can take any values. However, it is easy to see that the problem would be equivalent if the columns of $\bm{C}$ were constrained to be either zero columns or the respective columns of $\bm{A}$. This holds since if $\bm{C}$ contains a non-zero column, it could be replaced by the respective column of $\bm{A}$, and the respective column of $\bm{B}$ can be replaced by a zero column. This does not increase the cost nor the rank of $\bm{B}$. Thus any of the two formulations allows to restore the column outliers in $\bm{A}$ from $\bm{C}$.

%
%

By a reduction \cite[Lemma~1]{FominGLP020} similar to  Lemma~\ref{lem:ROsplcase}, we can show that 
Theorem~\ref{theorem:main} yields the following theorem. 
\begin{theorem}\label{cor:RLRMA}
\probRLRMA  is FPT parameterized by $B$ when $p$ and $r$ are constants. 
\end{theorem}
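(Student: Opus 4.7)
The plan is to reduce \probRLRMA to \probconstrainedcl and then invoke Theorem~\ref{theorem:main}. The guiding observation is that any $m\times n$ matrix $\bm{B}$ over $\gfp{p}$ of GF$(p)$-rank at most $r$ admits a factorization $\bm{B}=\bm{U}\bm{V}$ with $\bm{U}\in\gfp{p}^{m\times r}$ and $\bm{V}\in\gfp{p}^{r\times n}$, so its columns take at most $k:=p^r$ distinct values $\bm{c}_1,\ldots,\bm{c}_k\in\gfp{p}^m$. Group the columns of $\bm{A}$ by which $\bm{c}_i$ appears in the corresponding column of $\bm{B}$, and let $O\subseteq\{1,\ldots,n\}$ be the (at most $\ell$) indices where $\bm{C}$ is nonzero; by the remark following the problem definition we may assume these columns of $\bm{C}$ coincide with the corresponding columns of $\bm{A}$. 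The cost $\|\bm{A}-\bm{B}-\bm{C}\|_0$ then equals the clustering cost $\sum_{i=1}^{k}\sum_{j\in I_i} d_H(\bm{a}^j,\bm{c}_i)$.

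The key step is to encode the joint constraint ``the $\bm{c}_i$ all come from the same $\bm{U}$'' as coordinate-wise $k$-ary relations. Fix once and for all an enumeration $w^1,\ldots,w^k$ of $\gfp{p}^r$. Then the $i$-th coordinate tuple $(\bm{c}_1[i],\ldots,\bm{c}_k[i])$ equals $(u\cdot w^1,\ldots,u\cdot w^k)$ for some $u\in\gfp{p}^r$ (namely the $i$-th row of $\bm{U}$), and, crucially, this is the \emph{only} structural requirement on the centers. Define the single $k$-ary relation
\[
R \;=\; \bigl\{\,(u\cdot w^1,\,u\cdot w^2,\,\ldots,\,u\cdot w^k)\;:\;u\in\gfp{p}^r\,\bigr\}\;\subseteq\;\Sigma^k,
\]
and set $R_1=R_2=\cdots=R_m=R$. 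Then an ordered tuple $(\bm{c}_1,\ldots,\bm{c}_k)$ satisfies $\R=\{R_1,\ldots,R_m\}$ if and only if it arises from some $\bm{U}\in\gfp{p}^{m\times r}$ as above, i.e., the $\bm{c}_i$ jointly lie in a common subspace of $\gfp{p}^m$ of dimension at most $r$. This yields in polynomial time an equivalent instance of \probconstrainedcl with $k=p^r$, alphabet size $|\Sigma|=p$, and the same $B$ and $\ell$.

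Applying Theorem~\ref{theorem:main} gives running time $(kB)^{O(kB)}\cdot p^{kB}\cdot n^{O(k)}\cdot m^2$; since $p$ and $r$ are constants, both $k=p^r$ and $|\Sigma|=p$ are constants, so this is $f(B)\cdot\poly(m,n)$, proving fixed-parameter tractability in $B$. The main conceptual step is recognizing that the global rank-$r$ constraint decomposes coordinate-wise into a single fixed $k$-ary relation once the labels $w^1,\ldots,w^k$ are fixed; this is exactly what makes \probconstrainedcl the right abstraction for low-rank approximation. One minor technicality is that if the optimal $\bm{B}$ has rank strictly less than $r$, some of the $\bm{c}_i$ coincide, violating the ``non-empty clusters'' convention of \probconstrainedcl; this is handled by looping over all values $k'\in\{1,\ldots,p^r\}$ for the number of distinct centers (and the corresponding relations obtained by projecting $R$), which only multiplies the runtime by a constant and preserves the FPT guarantee.
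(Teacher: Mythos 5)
Your reduction is correct and is essentially the paper's own argument: the paper proves this theorem by citing the reduction of \cite[Lemma~1]{FominGLP020} from low GF$(p)$-rank approximation to constrained clustering with $k=p^r$ centers constrained by the coordinate-wise relation $\{(u\cdot w^1,\ldots,u\cdot w^k):u\in\gfp{p}^r\}$, combined with Theorem~\ref{theorem:main}, exactly as you do. Your handling of the outlier columns via the remark after the problem definition and of possibly empty clusters via looping over $k'$ (a constant number of subinstances since $p,r$ are constants) fills in the same routine details.
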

%
 
 Another popular variant of   low-rank matrix approximation is the case when  the approximation matrix $\bm{B}$ is of low Boolean rank. 
More precisely, 
 let $\bm{A}$ be a binary $m\times n$ matrix. Now we consider the elements of $\bm{A}$ to be \emph{Boolean} variables. 
The \emph{Boolean rank} of $\bm{A}$ is the minimum $r$ such that $\bm{A}=\bm{U}\wedge \bm{V}$ for a Boolean $m\times r$ matrix $\bm{U}$ and a Boolean $r\times n$ matrix $\bm{V}$, where the product is Boolean, that is,  the logical $\wedge$ plays the role of multiplication and $\vee$ the role of sum.
The variant of the low Boolean-rank matrix approximation is the following problem. 
 
\medskip
 \defproblema{\probRBRMA}%
{A binary  $m\times n$ matrix $\bm{A}$, a positive integer $r$, non-negative integers $B$ and $\ell$. }%
{Decide whether there is a binary matrix $\bm{B}$ of Boolean rank  $\leq r$, and a binary matrix $\bm{C}$ with at most $\ell$ non-zero columns  such that $\|\bm{A}-\bm{B} -\bm{C} \|_0\leq B $.

}


By Theorem~\ref{theorem:main} and reduction from constrained clustering to Boolean-rank matrix approximation identical to  \cite[Lemma~2]{FominGLP020},  we have the following.
\begin{theorem}\label{cor:BLRMA}
\probRBRMA  is FPT parameterized by $B$ when $r$ is a constant. 
\end{theorem}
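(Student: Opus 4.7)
The plan is to reduce \probRBRMA to \probconstrainedcl along the lines of Lemma~\ref{lem:ROsplcase}, mirroring \cite[Lemma~2]{FominGLP020}, and then invoke Theorem~\ref{theorem:main} with $k = 2^r$ and $|\Sigma| = 2$. The structural observation driving the reduction is that any Boolean matrix $\bm{B}$ of Boolean rank at most $r$ factors as $\bm{B} = \bm{U} \wedge \bm{V}$ with $\bm{U}$ of size $m \times r$ and $\bm{V}$ of size $r \times n$, so every column of $\bm{B}$ takes one of at most $2^r$ values, indexed by $\bm{v} \in \{0,1\}^r$.

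Concretely, I would introduce $k = 2^r$ cluster centers $\bm{c}_{\bm{v}}$ and impose $\bm{c}_{\bm{v}}[i] = \bigvee_{\ell=1}^r \bm{U}[i,\ell] \wedge \bm{v}[\ell]$. Since the $k$-tuple $(\bm{c}_{\bm{v}}[i])_{\bm{v} \in \{0,1\}^r}$ is determined entirely by the row $u \in \{0,1\}^r$ of $\bm{U}$ at coordinate $i$, I would define a single $k$-ary relation $R$ as the image of the map $u \mapsto (\bigvee_{\ell} u[\ell] \wedge \bm{v}[\ell])_{\bm{v}}$ as $u$ ranges over $\{0,1\}^r$, and set $R_i = R$ at every row. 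By construction, an ordered center set $\{\bm{c}_{\bm{v}}\}_{\bm{v}}$ satisfies $\R$ if and only if it is realised by some $m \times r$ Boolean $\bm{U}$. Emptiness of clusters, corresponding to column types absent from $\bm{V}$, is handled as in the excerpt by iterating the reduction over the effective number of non-empty clusters $k' \in \{1, \ldots, 2^r\}$ and adjusting $R$ by projection.

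The remaining work is to verify cost equivalence in both directions. Given a \probRBRMA solution $(\bm{B}, \bm{C})$, the centers $\bm{c}_{\bm{v}} = \bm{U} \wedge \bm{v}$ satisfy $\R$, the outlier set $O$ is the support of $\bm{C}$, and each non-outlier column is assigned to the center indexed by its column of $\bm{V}$, contributing exactly $\|\bm{A} - \bm{B} - \bm{C}\|_0$ to the clustering cost; on outlier columns the remark following the definition of \probRLRMA lets us align $\bm{C}$ with $\bm{A}$ (and zero $\bm{B}$) at no extra cost. The converse reads off $\bm{U}$ row by row, converts cluster indices into columns of $\bm{V}$, and absorbs outlier columns into $\bm{C}$. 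Plugging $k = 2^r$ and $|\Sigma| = 2$ into Theorem~\ref{theorem:main} yields a running time that is FPT in $B$ for constant $r$. I expect the main subtlety to be the outlier translation, since \probconstrainedcl imposes no constraint on outlier columns while \probRBRMA encodes them through $\bm{C}$; the $\bm{C}$-modification remark is exactly what makes this translation lossless and cost-preserving in both directions.
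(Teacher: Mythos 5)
Your proposal is correct and follows essentially the same route as the paper: the paper proves this theorem by citing the reduction of \cite[Lemma~2]{FominGLP020} from low Boolean-rank approximation to constrained clustering (with $k=2^r$ centers constrained row-wise to be realisable by a common $\bm{U}$) and then invoking Theorem~\ref{theorem:main}, which is exactly the reduction you reconstruct, including the outlier translation via the remark on normalising $\bm{C}$.
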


Finally, we consider  clustering with outliers. This problem looks very similar to feature selection.
The only difference is that instead of features (the rows of the matrix $\bm{A}$), we seek to remove some columns of  $\bm{A}$. More precisely, we consider the following problem.   

\medskip
\defproblema{\probClusteringWCO}%
{An alphabet $\Sigma$, an $m\times n$ matrix $\bm{A}$ with columns $\bm{a}^1, \bm{a}^2,\ldots,\bm{a}^n$ such that $\bm{a}^j\in {\Sigma}^m$ for all $1\le j\le n$, a positive integer $k$, non-negative integers $B$ and $\ell$. }%
{Decide whether there is a subset $O \subset \{1, 2, \dots, n\}$ of size at most $\ell$, a partition of $\{1,2,\ldots,n\} \setminus O$ 
into $k$ sets $\{I_1,I_2,\ldots,I_k\}$ called clusters, and vectors $\bm{c}_1,\bm{c}_2,\ldots,\bm{c}_k \in {\Sigma}^m$ such that  the cost of clustering \[\sum_{i=1}^k \sum_{j \in I_i} d_H(\bm{a}^j,\bm{c}_i)\le B.\]}



 
Note that \probClusteringWCO is also  a special case of \probconstrainedcl when  every  relation $R_i\in \R$ contains   all possible $k$-tuples over $\Sigma$, that is, there are no constraints on the centers. Hence, by Theorem \ref{theorem:main}, we readily obtain the same result for this problem. However, in this special case we show that it is possible to obtain an improved result.


\begin{restatable}{theorem}{thmwco}\label{theorem:WCO}
\probClusteringWCO is solvable  in time $2^{O(B\log B)}{\vert\Sigma\vert}^{B}\cdot(nm)^{O(1)}$.
\end{restatable}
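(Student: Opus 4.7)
The plan is to refine the algorithm of Theorem~\ref{theorem:main} by exploiting the absence of any constraint on cluster centers in \probClusteringWCO, thereby removing both the $|\Sigma|^{kB}$ and the $n^{O(k)}$ dependencies.

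We first carry out a structural analysis of any solution of cost at most $B$. Call a non-outlier column $\bm{a}^j$ \emph{pristine} if $\bm{a}^j=\bm{c}_{i(j)}$ and \emph{edited} otherwise; since every edited column contributes at least one to the cost, at most $B$ columns are edited, and similarly the number of \emph{positive-cost} clusters (those with $B_i\ge 1$) is at most $B$. A cluster with no pristine member --- a \emph{rare} cluster --- satisfies $|I_i|\le B_i$, so all rare clusters together contain at most $B$ columns. By a simple exchange argument we may assume that any column whose value equals some cluster center is placed in that cluster as pristine, which forces every equivalence class of $\bm{A}$'s columns to be either a \emph{center class} coinciding with the pristine part of a unique frequent cluster, or a \emph{non-center class} whose columns are either outliers or edited. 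The total size of non-center classes is therefore at most $\ell+B$.

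Given this structure, the algorithmic plan is to enumerate a combinatorial \emph{skeleton} describing the at most $B$ edits and the at most $B$ rare-cluster column-occurrences, which together account for every deviation from the trivial solution in which each equivalence class is its own zero-cost cluster. We guess (a) the distribution of edits and rare-cluster members among at most $B$ positive-cost clusters, the per-cluster grouping of edits into dirty columns, and the partition of rare-cluster entries into rare clusters, all of count $B^{O(B)}=2^{O(B\log B)}$; and (b) the new value in $\Sigma$ written by each edit, contributing $|\Sigma|^B$ choices. Given this skeleton, the center of each positive-cost cluster is fixed on its edited rows; on the remaining rows either the cluster is rare (in which case its at most $B$ explicitly-listed members determine the center as their coordinate-wise plurality) or it is a frequent positive-cost cluster whose pristine part is an equivalence class of $\bm{A}$, and we identify that class in polynomial time by matching the known partial center against $\bm{A}$'s equivalence structure.

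Once the positive-cost clusters and their members are pinned down, the remaining columns partition by value into equivalence classes, each becoming its own zero-cost cluster at no $k$-dependent cost: this is precisely the step where the $n^{O(k)}$ factor of Theorem~\ref{theorem:main} is avoided, since we do not enumerate over clusters at all. A greedy assignment of the (at most $\ell+B$) leftover non-center columns to their nearest positive-cost center, followed by promoting the $\ell$ costliest such assignments to outliers, completes the construction, and we then verify feasibility against $B$, $\ell$, and $k$. The principal obstacle we foresee is the reconstruction step: we must argue that, for every frequent positive-cost cluster, its pristine equivalence class is uniquely and efficiently recoverable from the guessed skeleton (from the edits' new values together with the constraint that every dirty column of the cluster lies at prescribed Hamming distance from its center), even though the specific row indices and dirty-column indices are not guessed. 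Working this out may require adding a little more information to the skeleton --- an extra $2^{O(B\log B)}$ factor at worst --- while staying within the claimed running time of $2^{O(B\log B)}|\Sigma|^B\cdot(nm)^{O(1)}$.
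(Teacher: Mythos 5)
Your structural analysis is essentially sound and parallels the paper's own: the pristine/edited dichotomy, the bound of $B$ on the number of edited columns and on the number of positive-cost clusters, the observation that a cluster with no pristine member has at most $B_i$ members, and the normalization placing every column whose value equals a center into that center's cluster all correspond to the paper's decomposition into initial clusters (equivalence classes of identical columns), Lemma~\ref{lem:atmost1split}, and Observation~\ref{obs:noofcompclusters}.

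The gap is in the algorithmic core. Your skeleton records only the combinatorial \emph{pattern} of the solution (how the at most $B$ edits distribute over clusters and dirty columns, plus the new symbols), but neither \emph{which rows} the edits occur in nor \emph{which columns or equivalence classes} act as dirty columns, rare-cluster members, or pristine parts of frequent positive-cost clusters. Recovering this missing data is not a $2^{O(B\log B)}$ afterthought, as you suggest at the end: naively it costs $\binom{m}{B}=m^{\Theta(B)}$ for the row positions and $n^{\Theta(B)}$ for the participating classes, and your proposed shortcut of ``matching the known partial center against $\bm{A}$'s equivalence structure'' cannot get off the ground, because the partial center is known only as a multiset of symbols with no row indices attached, so there is nothing concrete to match. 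This reconstruction problem is precisely what the paper's machinery exists to solve: color coding with $2B$ colors isolates the at most $2B$ equivalence classes entering nontrivial clusters (reducing class selection to picking one class per color), and Marx's subhypergraph enumeration (Theorem~\ref{th:hypergraph}), combined with the structural lemma that the set of deviating row positions induces a subhypergraph of fractional cover number at most $2$ (then sampled down to $O(\log B)$ edges of cover number $4$), enumerates the candidate position sets in $2^{O(B\log B)}(nm)^{O(1)}$ time, after which the $|\Sigma|^{B}$ factor pays for the symbols. Without these two ingredients, or substitutes of comparable power, your proposal does not reach the claimed running time; with them, it essentially becomes the paper's proof.
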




%

In particular, the theorem implies that the problem is \classFPT parameterized by $B$ and $|\Sigma |$. 
  We note that the running time of Theorem~\ref{theorem:WCO} matches the running time in  
\cite{FominGP20} obtained for the \probAtMostClust problem without outliers on binary data. 
  The interesting feature of the theorem is that the running time of the algorithm does not depend on the number of outliers $\ell$, matching the bound of the problem without outliers. Most of the clustering procedures in robust statistics, data mining and machine learning perform  well only for small number of outliers.  Our theorem  implies that if  all of the inlier points could be naturally partitioned into $k$ distinct clusters
with small cost, then such a clustering could be efficiently recovered even after arbitrarily many outliers are added. 

\vspace{-2mm}
\subparagraph*{Related Work.} 
\textsc{Constrained Clustering} (without outliers) was introduced in   \cite{FominGLP020} as a tool for designing EPTAS for  \probBFact. 
\probRLRMA is a variant of robust PCA for categorical data.
The study of robust PCA, where one seeks for a PCA when the input data is noisy or corrupted, is 
 the large class of extensively studied  problems, see 
  the books \cite{VidalMS16,bouwmans2016handbook}. 
There are many models of robustness in the literature, most relevant to our work
is the approach that became popular after the work of  
 \cite{CandesLMW11}.  
The variant of robust PCA where one seeks for identifying a set of outliers, also known as PCA with outliers, were studied in 
\cite{BhaskaraK18,chen2011robust,XuCS10,DSimonovFGP19}.    
  
For the vanilla variant, \probLRMA,
  a number of parameterized and approximation algorithms were developed \cite{BanBBKLW19,FominGLP020,FominGP20,KumarPRW19}. 

 \probBFact  
  has  attracted much attention, especially in the data mining and knowledge discovery communities. 
In data mining, matrix decompositions are often used to produce concise representations of data. 
Since much of the real data  
 is binary or even Boolean in nature, 
Boolean low-rank approximation could provide a deeper insight into 
the semantics associated with the original matrix. There is a big body of work done on \probBFact.  
We refer to
 \cite{LuVAH12,MiettinenMGDM08,DBLP:conf/kdd/MiettinenV11,Miettinen020} for further references on this interesting problem.
 Parameterized algorithms for   \probBFact  (without outliers) were studied in \cite{FominGP20}.

There are several approximations and parameterized algorithms known for \probAtMostClust, which is the 
  vanilla (without outliers) case of \probClusteringWCO  and with $\Sigma=\{0,1\}$ \cite{OstrovskyR02,FominGLP020,BanBBKLW19,fomin_et_al:LIPIcs:2019:11576}. Most relevant to our work is the parameterized algorithm for  \probAtMostClust from  \cite{FominGP20}.  Theorem \ref{theorem:WCO}  extends the result from  \cite{FominGP20} to clustering with outliers.

\vspace{-2mm}

\subparagraph*{Paper Outline.}

\iffull
The remaining part of this work is organized as follows.
We briefly outline our techniques in Section~\ref{sec:techniques}.
In Section \ref{sec:constrained} we describe the \classFPT algorithm for \probconstrainedcl.
The improved \classFPT algorithm for \probClusteringWCO follows in Section~\ref{sec:columnoutliers}.
Next, in Section~\ref{sec:rowoutliers}  we present the reduction from \probClusteringWRO to \probconstrainedcl.
The hardness results are deferred to Section~\ref{sec:hardness}.
Finally, in Section~\ref{sec:concl}, we conclude with some open problems.
\else
In the remaining part of this extended abstract we focus on our algorithmic results.
We briefly outline our techniques in Section~\ref{sec:techniques}. Then, in Section \ref{sec:constrained} we describe our main result, the \classFPT algorithm for \probconstrainedcl. Finally, in Section~\ref{sec:concl}, we conclude with some open problems. Due to space constraints, the detailed presentation of the remaining results  
appears in the attached full version of the paper.
\fi

\section{Our Techniques} \label{sec:techniques}

Both of our algorithmic results, Theorems~\ref{theorem:main} and~\ref{theorem:WCO}, have at their core the subhypergraph enumeration technique introduced by Marx \cite{Marx08}. This is fairly  natural, since our algorithms solve generalized versions of the vanilla binary clustering problem, and the only known FPT algorithm~\cite{FominGP20} for the latter problem parameterized by $B$ relies on the hypergraph enumeration as well. In fact, our algorithm for \probClusteringWCO closely follows this established approach of applying the hypergraph construction to clustering problems (\cite{fomin_et_al:LIPIcs:2019:11576}, and partly \cite{FominGP20}). However, for
the \probconstrainedcl problem the existing techniques do not work immediately. To deal with this, we generalize the previously used hypergraph construction.
In what follows, we present the key ideas of both algorithms. We begin with the simpler case of \probClusteringWCO, even though our main results are for \probClusteringWRO and \probconstrainedcl.

For the presentation of our algorithms, we recall standard hypergraph notations and the notion of a fractional cover of a hypergraph.
A hypergraph $G(V_G , E_G )$ consists of a set $V_G$ of vertices and a set $E_G$ of edges, where each edge is a subset of $V_G$. Consider two hypergraphs $H(V_H , E_H )$ and $G(V_G , E_G )$. We say that $H$ appears in $G$ at $V'\subseteq V_G$ as a partial hypergraph if there is a bijection $\pi$ between $V_H$ and $V'$ such that for any edge $e\in E_H$, $\pi(e)\in E_G$, where $\pi(e)=\cup_{v\in e} \pi(v)$. $H$ is said to appear in $G$ at $V'\subseteq V_G$ as a subhypergraph if there is a bijection $\pi$ between $V'$ and $V_H$ such that for any edge $e\in E_H$, there is an edge $e'\in E_G$ such that  $\pi(e)=e'\cap V'$. 

A fractional edge cover of $H$ is an assignment $\phi: E_H\rightarrow [0,1]$ such that for every vertex $v\in V_H$, the sum of the values assigned to the edges that contain $v$ is at least 1, i.e, $\sum_{e\ni v} \phi(e)\ge 1$. The fractional cover number ${\rho}^*(H)$ of $H$ is the minimum value $\sum_{e\in E} \phi(e)$ over all fractional edge covers $\phi$ of $H$. The following theorem is required for our algorithm.

\begin{theorem}\label{th:hypergraph}\cite{Marx08}
Let $H(V_H , E_H )$ be a hypergraph with fractional cover number
${\rho}^*(H)$, and let $G(V_G , E_G )$ be a hypergraph where each edge has size at most $L$. There
is an algorithm that enumerates, in time ${\vert V_H \vert}^{ O(\vert V_H\vert  )} \cdot L^{\vert V_H \vert {\rho}^*(H)+1}\cdot {\vert E_G \vert}^{{\rho}^*(H)+1}\cdot {\vert V_G \vert}^2$, 
every subset $V \subseteq V_G$ where $H$ appears in $G$ as a subhypergraph. 
\end{theorem}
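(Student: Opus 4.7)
This is a restatement of Marx's subhypergraph enumeration theorem, so the plan is to follow the strategy built around the LP fractional cover together with an AGM-style counting bound. The high-level approach is to reformulate the search for embeddings of $H$ into $G$ as a constraint satisfaction problem where the variables are the vertices of $V_H$, the domain is $V_G$, and for every $e \in E_H$ there is a constraint demanding an $e' \in E_G$ with $\pi(e) = e' \cap V'$. The count of satisfying assignments is controlled by the AGM bound: if each relation has size at most $N$, then the number of homomorphisms is at most $N^{\rho^*(H)}$. Here $N$ is $|E_G| \cdot L^{|V_H|}$ for a single edge constraint (choose an edge of $G$, then choose which of its $\le L$ vertices correspond to each of the $|e|$ positions), which will supply the $L^{|V_H|\rho^*(H)}$ and $|E_G|^{\rho^*(H)}$ factors in the running time.

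\textbf{Main steps.} First, I would compute an optimal fractional edge cover $\phi$ for $H$ by solving the LP in time polynomial in $|V_H|+|E_H|$; this gives $\rho^*(H)$ along with the weights driving the enumeration. Second, I would design a branching procedure that, processing the edges of $H$ in a carefully chosen order (guided by the support of $\phi$), extends a partial mapping $\pi : V_H \to V_G$ one edge at a time: whenever an edge $e\in E_H$ is processed, the algorithm chooses an $e'\in E_G$ intended to host $e$ and an injection from the not-yet-assigned vertices of $e$ into $e'$. The skeleton of the search tree is controlled by LP duality so that the total number of leaves is bounded by a product that telescopes, via the AGM inequality, into $L^{|V_H|\rho^*(H)} \cdot |E_G|^{\rho^*(H)}$ up to a $|V_H|^{O(|V_H|)}$ factor coming from the choice of branching order and from bookkeeping that avoids double-counting identical image sets. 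Third, for each candidate $V' \subseteq V_G$ produced by a leaf, I would verify in $O(|V_G|^2)$ time whether $H$ appears at $V'$ as a subhypergraph (by checking the definition edge-by-edge, scanning $E_G$), and output $V'$ if so. This accounts for the final $|V_G|^2$ factor in the running time.

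\textbf{Main obstacle.} The hard part is proving that the branching procedure indeed achieves the $\rho^*(H)$ exponent rather than the naive $|E_H|$ exponent one would get by branching independently on every edge of $H$. This requires a Shearer/entropy-type or LP-duality argument: one shows that in any attempted extension, the fractional weights $\phi(e)$ apportion "responsibility" for each vertex of $V_H$ so that the total branching factor over the tree is bounded by the AGM product. A secondary technical issue is that the theorem asks for subsets $V'\subseteq V_G$, not embeddings, so several leaves of the enumeration can correspond to the same $V'$; this is handled by merging leaves with the same image set and relying on the $|V_H|^{O(|V_H|)}$ overhead to absorb the duplication. Finally, one must take care that the injectivity of $\pi$ (the bijection in the definition of subhypergraph) is enforced throughout the branching without breaking the AGM accounting, which is standard but requires a careful invariant on the partial mappings.
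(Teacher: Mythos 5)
This statement is not proved in the paper at all: it is imported verbatim from Marx's work \cite{Marx08} as a black-box tool, so there is no internal proof to compare against. Your sketch reconstructs the argument of the cited source along the standard lines (optimal fractional edge cover, a Shearer/AGM-type product bound $\prod_{e\in E_H}(|E_G|\cdot L^{|e|})^{\phi(e)}\le |E_G|^{\rho^*(H)}L^{|V_H|\rho^*(H)}$ on the number of candidate placements, branching enumeration guided by the cover, and a final verification pass), which is essentially the approach Marx uses. The one point worth flagging explicitly is that the subhypergraph condition $\pi(e)=e'\cap V'$ is global in $V'$, so the CSP/join formulation must use the relaxed containment condition $\pi(e)\subseteq e'$ for the counting bound and defer the exact-intersection check to the verification step; your write-up handles this implicitly but should state it.
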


\subsection{The Algorithm for \probClusteringWCO }

Given an instance ($\bm{A}$, $k$, $B$, $\ell$) of \probClusteringWCO, we note that at most $2B$ distinct columns can belong to ``nontrivial''  clusters (with at least 2 distinct columns), exactly like in the case without the outliers. So we employ a color-coding scheme \cite{AlonYZ95} to partition the columns in a way so that every column belonging to a nontrivial cluster of a fixed feasible solution is colored with its own color.
Thus we reduce to multiple instances of the problem we call Restricted Clustering. In Restricted Clustering, we are given sets of columns $U_1,U_2,\ldots,U_p$ and a  parameter $B$. The goal is to select $p$ columns $\bm{b_1},\bm{b_2},\ldots,\bm{b_p}$ and a cluster center $\bm{s}$ such that $\bm{b_t}\in U_t$ for $1\le t\le p$ and $\sum_{t=1}^{p} d_H(\bm{b_t},\bm{s}) \le B$.  

Restricted Clustering is similar to the Cluster Selection problem of \cite{fomin_et_al:LIPIcs:2019:11576} and \cite{FominGP20}, and the hypergraph-based algorithm to solve it is essentially the same as in \cite{fomin_et_al:LIPIcs:2019:11576}. However, next we briefly sketch the details, as this construction serves as the base for our more general \probconstrainedcl algorithm. First, guess $\bm{b_1} \in U_1$ in the optimal solution to the instance of Restricted Clustering. If the cost of the optimal solution is at most $B$, then $d_H(\bm{b_1}, \bm{s})$ is at most $B$ as well. If we know the set of at most $B$ positions $P$ where $\bm{b_1}$ and $\bm{s}$ differ, we can easily identify $\bm{s}$ by trying all possible $|\Sigma|^B$ options at these positions. For each option, we can find the closest $\bm{b_i}$ from each $U_i$ and check whether the total cost is at most $B$.

To show that we can enumerate all potential sets of deviating positions in FPT time, we identify the instance with the following hypergraph $H(V, E)$. The vertices $V$ are the positions $\{1, \ldots, m\}$. With each column $\bm{x}$ in $\bigcup_{i = 1}^p U_i$, we identify a hyperedge containing exactly the positions where $\bm{x}$ is different from $\bm{b_1}$. Now the optimal set of positions $P$ and the optimal columns $\{\bm{b_i}\}_{i=1}^p$ induce a subhypergraph $H_0(V_0, E_0)$ of $H$ such that $|V_0| \le B$ and the fractional cover number  of $H_0$ is at most two.
The latter holds simply because wherever $\bm{s}$ is different from $\bm{x}$, at least half of the chosen columns must also be different from $\bm{x}$, otherwise modifying $\bm{s}$ to match $\bm{x}$ decreases the cost.
If we enumerate all possible subhypergraphs $H_0$ and all possible locations in $H$ where they occur, we can surely find the optimal set of locations $P$. Since $|V_0| \le B$, enumerating all choices for $H_0$ is clearly in FPT time. For a fixed $H_0$, finding all occurrences in $H$ is in FPT time by Theorem~\ref{th:hypergraph}.
  Note that applying Theorem~\ref{th:hypergraph} results in FPT time only when the fractional cover number of $H_0$ is bounded by a constant. Also, by a sampling argument 
  one can show that it suffices to consider only those $H_0$ with $O(\log B)$ hyperedges.
It follows that the number of distinct hypergraphs that we have to consider for enumeration is bounded by only $2^{O(B\log B)}$. Thus it is possible to bound the dependence on $B$ in the running time by $2^{O(B\log B)}$.

\iffull
A specific issue we need to deal with in the \probClusteringWCO problem is the following. If we encounter a number of identical columns, we need to treat them as an indivisible set (\emph{initial cluster}), as only a cluster with sufficiently many different columns can have large cost. However, a potential flaw of this approach is that in an optimal solution, identical columns can belong to different clusters and to the set of outliers. To handle this, we prove that the optimal solution can always be ``normalized''. Namely, we can rearrange clusters of the optimal solution in a way that at most one  initial cluster is split between a solution cluster and the set of outliers, and every other initial cluster is completely contained in either one of the resulting clusters or the set of outliers. Then, guessing the one initial cluster being split solves the issue.
After the color-coding and solving Restricted Clustering, each remaining initial cluster is either a trivial cluster in the solution, or belongs to the outliers. This can be decided greedily.
\else
\fi

\subsection{The Algorithm for \probClusteringWRO}

For \probClusteringWRO, the above-mentioned approach is not applicable, for several reasons. Most crucially, it does not seem that one can partition the problem into $k$ independent instances of a simpler single-center selection problem, in a way that we reduce \probClusteringWCO to $k$ instances of Restricted Clustering (for a fixed coloring). Intuitively, the possibility to remove a subset of features does not allow such a partition as all the clusters depend simultaneously on the choice of those features.
Moreover, our hardness result shows that for \probClusteringWRO the running time cannot match with \probClusteringWCO, as no $n^{o(k)}$ time algorithm is possible for constant $B$, assuming ETH.


By Lemma~\ref{lem:ROsplcase}, for solving \probClusteringWRO, it suffices to solve
 \probconstrainedcl. For the same reasons as with \probClusteringWRO, the approach used for \probClusteringWCO fails, as the constraints on the centers do not allow to form clusters independently.
Instead, we generalize the hypergraph construction used for Restricted Clustering to handle the choice of all $k$ centers simultaneously, as opposed to just one center at a time. This is the most technical part of the paper.
 The main idea is to base the hypergraph on $k$-tuples of columns instead of just single columns. In the next section,
 we formalize this intuitive discussion, presenting the proof in full detail.

\section{The Algorithm for \probconstrainedcl}\label{sec:constrained}
In this section we prove Theorem~\ref{theorem:main} by giving an algorithm for  
 \probconstrainedcl that runs in $(kB)^{O(kB)}n^{O(k)}m^2{\vert \Sigma\vert}^{kB}$ time. 
First, we prove a structural lemma that will be useful for analysis of the algorithm. 


\subsection{Structural Lemma}

Here we show that an optimal set of centers corresponds to 
a \say{good} subhypergraph in a certain hypergraph. Consider a feasible clustering $\mathcal{I}=\{I_1,I_2,\ldots,I_k\}$ having the minimum cost.
Let $\{\bm{c}_1,\bm{c}_2,\ldots,\bm{c}_k\}$ be a fixed set of centers corresponding to $\mathcal{I}$. Also, let $T$ be the set of all tuples of the form $(\bm{a}^{i_1},\bm{a}^{i_{2}},$ $\ldots,\bm{a}^{i_{k}})$ such that $i_j \in I_j$ for all $j$. Note that we do not actually need to know the set $T$ --- we just introduce the notation for the sake of analysis. For a $k$-tuple $x=(\bm{x}_1,\ldots,\bm{x}_k)$, we denote the tuple $(\bm{x}_1[j],\bm{x}_2[j],\ldots,\bm{x}_k[j])$ by $x[j]$. Two $k$-tuples $x$ and $y$ are said to differ from each other at location $j$ if $x[j]\ne y[j]$.

Let $C$ be the $k$-tuple such that $C=(\bm{c}_1,\bm{c}_2,\ldots,\bm{c}_k)$. Suppose $x=(\bm{x}_1,\ldots,\bm{x}_{k})$ is such that 
there are at most $B$ positions $h$ where $x[h]\ne C[h]$, and for each $1\le j\le m$, $x[j]\in R_j$. 
Consider the hypergraph $H$ defined in the following way with respect to $x$. The labels of the vertices of $H$ are in $\{1,2,\ldots,m\}$. For each $k$-tuple $y=(\bm{y}_1,\ldots,\bm{y}_k)$ of $T$, we add an edge $S\subseteq \{1,2,\ldots,m\}$ such that $h\in S$ if ${x}[h]\ne {y}[h]$.

In the following lemma, we show that the hypergraph $H$ has a \say{good} subhypergraph. 

\begin{lemma}\label{lem:existenceofhypergraph}
(Structural Lemma) Suppose the input is a \yesinstance. Consider a $k$-tuple $x=(\bm{x}_1,\ldots,\bm{x}_{k})$ such that there are at most $B$ positions $h$ where $x[h]\ne C[h]$ and for each $1\le j\le m$, $x[j]\in R_j$. 
Also, consider the hypergraph $H$ defined in the above with respect to $x$. There exists a subhypergraph $H^*(V^*,E^*)$ of $H$ with the following properties.
\begin{enumerate}
    \item $\vert V^*\vert \le B$.
    \item $\vert E^*\vert \le 200\ln B$.
    \item The indices in $V^*$ are the exact positions $h$ such that $x[h]\ne C[h]$. 
    \item The fractional cover number of $H^*$ is at most $4$.
\end{enumerate}
\end{lemma}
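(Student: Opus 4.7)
The plan is to take $V^* := P = \{h \in \{1, \ldots, m\} : x[h] \ne C[h]\}$, which has $|V^*| \le B$ by hypothesis, so properties (1) and (3) come for free. For the remaining two properties I would first observe that keeping \emph{all} of $H$'s edges (one per tuple of $T$) already yields fractional cover number at most $2$ on $V^*$, and then sample $O(\log B)$ edges to trim $|E^*|$ at the price of degrading the cover bound from $2$ to $4$.

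The heart of the argument is the claim that for every $h \in V^*$, at least $|T|/2$ tuples $y \in T$ satisfy $y[h] \ne x[h]$; I would prove it by contradiction. For each cluster index $j$, set $n_j := |\{i \in I_j : \bm{a}^i[h] = \bm{x}_j[h]\}|$, so the number of $y \in T$ with $y[h] = x[h]$ equals $\prod_{j=1}^k n_j$. If this quantity exceeded $|T|/2 = \prod_j |I_j|/2$, then necessarily $n_j > |I_j|/2$ for \emph{every} $j$, because if some $n_{j_0} \le |I_{j_0}|/2$ that factor alone drags the product down to $|T|/2$. So strictly more than half of the columns of every cluster $I_j$ agree with $\bm{x}_j$ at coordinate $h$. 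Now form the candidate center tuple $C'$ that agrees with $C$ everywhere except $C'[h] := x[h]$; since $x[h] \in R_h$ by hypothesis, $C'$ satisfies $\R$, so the triple $(O, \mathcal{I}, C')$ is still feasible. For every $j$ with $\bm{c}_j[h] = \bm{x}_j[h]$ the cost contribution at position $h$ is unchanged; for every $j$ with $\bm{c}_j[h] \ne \bm{x}_j[h]$ — and such a $j$ exists because $h \in V^*$ — the number of $i \in I_j$ matching the new center is $n_j > |I_j|/2$, strictly larger than the at most $|I_j| - n_j < |I_j|/2$ indices matching the old center. Summing, the total cost strictly decreases, contradicting the optimality of $C$.

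Given the claim, every vertex $h \in V^*$ lies in at least $|T|/2$ edges of $H$, so assigning weight $\phi(e) = 2/|T|$ to every edge is already a fractional cover of $V^*$ of total weight $2$. To shrink the edge set, I would sample $N := 200 \ln B$ tuples $y_1, \ldots, y_N$ uniformly and independently from $T$. For fixed $h \in V^*$, the indicator that $h \in S_{y_i}$ has mean at least $1/2$, so a multiplicative Chernoff bound gives $\Pr[\text{coverage of } h < N/4] \le e^{-N/16} \le B^{-12}$; a union bound over the at most $B$ vertices of $V^*$ leaves positive probability that every $h$ is covered by at least $N/4$ of the sampled restricted edges. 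Fixing any such outcome and setting $E^* := \{S_{y_i} \cap V^* : 1 \le i \le N\}$, property (2) holds, while the uniform weights $\phi(e) = 4/N$ on $E^*$ form a fractional cover of total weight at most $4$, giving property (4).

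The main obstacle is the optimality claim, and specifically the need to keep the modified center $C'$ feasible with respect to $\R$. A tempting but flawed move would be to change only the $j^*$-th coordinate of $C[h]$ for a single ``bad'' cluster $j^*$, which could break the $k$-ary relation $R_h$. The correct move is to overwrite the entire tuple $C[h]$ with $x[h]$ simultaneously and use the hypothesis $x[j] \in R_j$ stated in the lemma; the product-structure argument on the $n_j$ is what guarantees that this wholesale replacement yields a strict improvement at \emph{every} cluster disagreeing with $x$ at $h$, producing the required contradiction.
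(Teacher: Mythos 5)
Your proposal is correct and follows essentially the same route as the paper: take $V^*=P$, show every $h\in P$ is covered by at least half of the $|T|$ edges via the replace-$C[h]$-by-$x[h]$ optimality argument (your product-structure phrasing of why some cluster must have at least half its columns disagreeing is just a cleaner packaging of the paper's two-case analysis), and then Chernoff-sample $O(\log B)$ edges to get property (2) while relaxing the cover bound from $2$ to $4$. No gaps.
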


To prove the above lemma, first, we show the existence of a subhypergraph that satisfies all the properties except the second one. Let $P$ be the set of positions $h$ such that $x[h]\ne C[h]$. Let $H_0(V_0,E_0)$ be the subhypergraph of $H$ induced by $P$.
By definition of $x$, $P$ contains at most $B$ indices. Thus, the first property follows immediately. The third property also follows, as $V_0 = P$, is exactly the set of positions  $h\in \{1, 2, \ldots, m\}$, where ${x}[h]\ne C[h]$.
Next, we show that the fourth property holds for $H_0$. In fact, we show a stronger bound.  

\begin{lemma}
The fractional cover number of $H_0$ is at most $2$.
\end{lemma}

\begin{proof}
Note that the total number of edges of $H_0$ is $\tau=\vert T\vert$. We claim that each vertex of $H_0$ is contained in at least $\tau/2$  edges. 

Consider any vertex $h$ of $H_0$. Suppose there is a $1\le j\le k$, such that for at least $\lceil |I_j|/2 \rceil$ columns in $I_j$ the value at location $h$ is not $\bm{x}_j[h]$. Note that each such column contributes to $\Pi_{t^1=1}^{j-1} |I_{t^1}|\cdot \Pi_{t^2=j+1}^{k} |I_{t^2}|=\tau/|I_j|$ tuples $(\bm{y}_1,\ldots,\bm{y}_k)$ of $T$ such that $\bm{y}_j[h]\ne \bm{x}_j[h]$. Thus, the edge corresponding to each such tuple contains $h$. It follows that, at least  $\lceil |I_j|/2 \rceil\cdot (\tau/|I_j|)\ge \tau/2$ edges in $E_0$ contain $h$. 

In the other case, for all $1\le j\le k$ and less than $\lceil |I_j|/2 \rceil$ columns in $I_j$, the value at location $h$ is not $\bm{x}_j[h]$. We prove that this case does not occur. Note that there is a $k$-tuple $z$ in $R_h$ such that $z=x[h]$. Consider replacing $C[h]$ by $z$ at position $h$ of $C$. Next, we analyze the change in cost of the clustering $\mathcal{I}$. Note that the cost corresponding to positions other than $h$ remains same. For a $1\le j\le k$, if previously $\bm{c}_j[h]= \bm{x}_j[h]$, the cost remains same. Otherwise, $\bm{c}_j[h]\ne \bm{x}_j[h]$. Note that for more than $\lceil |I_j|/2 \rceil$ columns in $I_j$, the value at location $h$ is $\bm{x}_j[h]$. Thus, by replacing $\bm{c}_j[h]$ by $\bm{x}_j[h]$, the cost decrement corresponding to the index $j$ and location $h$ is at least $1$. As $x[h]\ne C[h]$, there is an index $j$ such that $\bm{c}_j[h]\ne \bm{x}_j[h]$. It follows that the overall cost decrement is at least $1$, which contradicts the optimality of the previously chosen centers. Hence, this case cannot occur. This completes the proof of the lemma.  
\end{proof}

So far we have proved the existence of a subhypergraph that satisfies all the properties except the second. Next, we show the existence of a subhypergraph that satisfies all the properties. The following lemma completes the proof of Lemma \ref{lem:existenceofhypergraph}. 
\iffull
\else
Its proof follows a standard sampling argument, and is presented in the full version. 
\fi

\begin{lemma}\label{lem:hg-wid-2nd-property}
Let $B \ge 2$. Consider the subhypergraph $H_0$ that satisfies all the properties of Lemma \ref{lem:existenceofhypergraph} except (2). It is possible to select at most $200\ln B$ edges from $H_0$ such that the subhypergraph $H_0^*$ obtained by removing all the other edges from $H_0$ satisfies all the properties of Lemma \ref{lem:existenceofhypergraph}. 
\end{lemma}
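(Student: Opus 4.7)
The plan is to obtain $H_0^*$ from $H_0$ by a straightforward random sampling of edges, together with a Chernoff plus union-bound argument. The key input from the previous lemma is the stronger fact established there: not only is the fractional cover number of $H_0$ at most $2$, but in fact each vertex of $V_0$ lies in at least $|E_0|/2$ edges of $E_0$. This uniform covering ratio is exactly what makes sampling work.

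Concretely, I would set $s = 200 \ln B$ and sample $s$ edges independently and uniformly at random, with replacement, from $E_0$. Let $E^*$ denote the multiset of sampled edges and, for each $v \in V_0$, let $X_v$ denote the number of sampled edges (counted with multiplicity) that contain $v$. Since each edge of $E_0$ contains $v$ with probability at least $1/2$, we have $\mathbb{E}[X_v] \ge s/2$. A standard Chernoff bound gives
\[
\Pr\!\left[X_v < s/4\right] \;\le\; \exp(-s/16).
\]
Taking a union bound over the at most $B$ vertices of $V_0$, the probability that some vertex has $X_v < s/4$ is at most $B \exp(-s/16)$, which is strictly less than $1$ for $s = 200 \ln B$ and $B \ge 2$. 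Hence there exists a realization of the sample in which every vertex $v \in V_0$ satisfies $X_v \ge s/4 \ge 1$.

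Fix such a realization and let $H_0^*(V^*, E^*)$ be the subhypergraph of $H_0$ whose edge set consists of the \emph{distinct} edges appearing in the sample and whose vertex set is the set of endpoints. Since every $v \in V_0$ is covered by at least one sampled edge, we have $V^* = V_0$, so properties (1) and (3) are inherited from $H_0$. The number of distinct sampled edges is at most $s = 200 \ln B$, giving property (2). Finally, for property (4), assign to each distinct edge $e \in E^*$ the weight $\phi(e) = 4 m_e / s$, where $m_e$ is the multiplicity of $e$ in the sample. Then $\sum_{e \in E^*} \phi(e) = 4$, and for every $v \in V^*$,
\[
\sum_{e \ni v} \phi(e) \;=\; \frac{4 X_v}{s} \;\ge\; 1,
\]
so $\phi$ is a fractional edge cover of $H_0^*$ of weight at most $4$.

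The only delicate point is the passage from the multiset sample (where multiplicities drive the Chernoff estimate) to the set of distinct edges that actually defines the subhypergraph; but this is handled cleanly by the weighting $\phi(e) = 4 m_e / s$, which converts multiplicity into fractional weight without increasing the edge count beyond $s$. Everything else is routine, so I expect no genuine obstacle, only calibration of the constants in the Chernoff bound to justify the figure $200 \ln B$ in the statement.
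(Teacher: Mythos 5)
Your proof is correct and follows essentially the same route as the paper's: a random sample of $O(\ln B)$ edges, a Chernoff bound guaranteeing that every vertex of $V_0$ is covered by a constant fraction of the sampled edges, and a union bound over the at most $B$ vertices; the paper merely samples each edge independently with probability $150\ln B/\tau$ (and so needs a second Chernoff bound to cap the number of selected edges at $200\ln B$), whereas you draw a fixed-size sample with replacement and handle multiplicities via the weights $\phi(e)=4m_e/s$. The only nit is that $\phi(e)=4m_e/s$ can exceed $1$ if an edge is sampled more than $s/4$ times, while the paper defines a fractional cover as a map into $[0,1]$; capping such weights at $1$ fixes this without affecting the total weight bound or the coverage condition.
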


\iffull
For the proof of the lemma, we need the following form of Chernoff bound. 

\begin{theorem}\label{th:chernoff}\cite{AngluinV77}
Let $X_1,\ldots,X_n$ be independent 0-1 random variables with Pr$[X_i=1]=p_i$. Let $X=\sum_{i=1}^n X_i$ and $\mu=E[X]$. Then \text{ for } $0< \beta \le 1$,

\begin{align*}
& Pr[X\le (1-\beta)\mu] \le exp(-\beta^2\mu/2)\\
& Pr[X\ge (1+\beta)\mu] \le exp(-\beta^2\mu/3)
\end{align*}

\end{theorem}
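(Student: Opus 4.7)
The plan is to apply the standard Cramér–Chernoff (moment generating function) method separately to each tail. In both cases I would invoke Markov's inequality on an exponential transform of $X$, factor the moment generating function using independence of the $X_i$, and optimize over the free exponential parameter.

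For the upper tail, I would fix a parameter $t > 0$ and write
\[
\Pr[X \ge (1+\beta)\mu] \;=\; \Pr\!\left[e^{tX} \ge e^{t(1+\beta)\mu}\right] \;\le\; e^{-t(1+\beta)\mu}\, E[e^{tX}]
\]
by Markov. Independence gives $E[e^{tX}] = \prod_{i=1}^n (1 + p_i(e^t-1))$, and the elementary inequality $1 + x \le e^x$ applied coordinate-wise yields $E[e^{tX}] \le \exp(\mu(e^t - 1))$. Plugging in the minimizer $t = \ln(1+\beta)$ produces
\[
\Pr[X \ge (1+\beta)\mu] \;\le\; \left(\frac{e^\beta}{(1+\beta)^{1+\beta}}\right)^{\!\mu}.
\]
It then remains to verify the real-analytic inequality $(1+\beta)\ln(1+\beta) - \beta \ge \beta^2/3$ on $(0,1]$, which reduces the bound above to $\exp(-\beta^2\mu/3)$. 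I would set $f(\beta) = (1+\beta)\ln(1+\beta) - \beta - \beta^2/3$, observe $f(0) = 0$, and study $f'(\beta) = \ln(1+\beta) - 2\beta/3$; since $f'(0) = 0$ and $f''(\beta) = 1/(1+\beta) - 2/3$ changes sign only once on $[0,1]$, nonnegativity of $f'$ (and hence of $f$) on $[0,1]$ follows from inspecting the endpoint values $f'(0) = 0$ and $f'(1) = \ln 2 - 2/3 > 0$.

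For the lower tail, the symmetric argument applies Markov to $e^{-tX}$ with $t > 0$, and after the analogous factorization followed by the choice $t = -\ln(1-\beta)$ it produces
\[
\Pr[X \le (1-\beta)\mu] \;\le\; \left(\frac{e^{-\beta}}{(1-\beta)^{1-\beta}}\right)^{\!\mu}.
\]
The remaining elementary inequality is $(1-\beta)\ln(1-\beta) + \beta \ge \beta^2/2$ on $[0,1)$; setting $g(\beta) = (1-\beta)\ln(1-\beta) + \beta - \beta^2/2$, I would check $g(0) = g'(0) = 0$ and $g''(\beta) = \beta/(1-\beta) \ge 0$, from which $g' \ge 0$ and hence $g \ge 0$ throughout the interval.

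The work is entirely routine; the only subtlety is to pick the optimizers $t$ so that the final exponents come out with the sharp constants $1/2$ and $1/3$ stated in the theorem, and then to carry out the two elementary calculus bounds. The asymmetry between the two constants (a weaker bound on the upper tail) is inherent to Bernoulli sums and cannot be improved uniformly on $(0,1]$; this asymmetry is the reason the two cases must be argued separately rather than combined into a single two-sided estimate.
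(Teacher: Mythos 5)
The paper does not prove this statement at all: it is imported as a black box from Angluin and Valiant \cite{AngluinV77}, so there is no internal proof to compare against. Your argument is the standard Cram\'er--Chernoff derivation and it is correct: Markov's inequality applied to $e^{\pm tX}$, factorization of the moment generating function by independence, the bound $1+x\le e^x$ giving $E[e^{tX}]\le\exp(\mu(e^t-1))$, the optimal choices $t=\ln(1+\beta)$ and $t=-\ln(1-\beta)$, and the two calculus inequalities $(1+\beta)\ln(1+\beta)-\beta\ge\beta^2/3$ and $(1-\beta)\ln(1-\beta)+\beta\ge\beta^2/2$, both of which you verify correctly (for the upper tail, $f'$ increases to its maximum at $\beta=1/2$ and then decreases to $f'(1)=\ln 2-2/3>0$, so $f'\ge 0$ throughout). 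The only cosmetic gap is the endpoint $\beta=1$ of the lower tail, which your interval $[0,1)$ excludes; it is covered either by continuity of both sides in $\beta$ or directly via $\Pr[X\le 0]=\prod_i(1-p_i)\le e^{-\mu}\le e^{-\mu/2}$.
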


Now, the proof of Lemma \ref{lem:hg-wid-2nd-property} is as follows. 

\begin{proof}
Recall that $\tau$ is the number of edges in $H_0$. Construct a hypergraph $H_0'$ by selecting  each edge of $H_0$ independently at random with probability $\frac{150\ln B}{\tau}$. The expected number of edges selected is $150\ln B$. Also, by Theorem \ref{th:chernoff} with $\beta =1/3$, the probability that at least $200\ln B$ edges are selected is at most $exp(-150\ln B/27)<1/B^2$. Each vertex of $H_0$ is contained in at least $\tau/2$  edges. Thus, the expected number of edges that cover a vertex of $H_0'$ is at least $75\ln B$. Moreover, by Theorem \ref{th:chernoff} with $\beta =1/3$, the probability that a given vertex of $H_0^*$ is covered by at most $50\ln B$ edges is at most $exp(-75\ln B/18) < 1/B^3$. Therefore, with probability at least $1-1/B^2-B\cdot 1/B^3$, $H_0'$ contains at most $200\ln B$ edges and each vertex of $H_0'$ is covered by at least $50\ln B$ edges. Thus, the fractional cover number of $H_0'$ is at most $4$. Hence, the second and the fourth property are satisfied. As the vertex set does not get changed, the first and the third property also hold. It follows that there exists a subhypergraph $H_0^*$ that contains at most $200\ln B$ edges from $H_0$ and satisfies all the properties of Lemma \ref{lem:existenceofhypergraph}.
\end{proof}
\fi

\subsection{The Algorithm for Constrained Clustering}
In this section, we describe our algorithm. 
The algorithm outputs a feasible clustering of minimum cost if there is a feasible clustering of the given instance. Otherwise, the algorithm returns \say{NO}.

\subparagraph*{The Algorithm.}

First, we consider all $k$-tuples $x=(\bm{x}_1,\ldots,\bm{x}_k)$ such that  $\bm{x}_j$ is a column of $\bm{A}$ for $1\le j \le k$, and apply the following refinement process on each of them. 
Here, a $k$-tuple $x$ of columns of $\bm{A}$ is said to differ from $\R$ at a position $j$ for $1\le j\le m$ if $x[j]\notin R_j$. 

\begin{itemize}
    \item Let $P \subseteq \{1,2,\ldots,m\}$ be the set of positions where $x$ differs from $\R$.
    \item If $\vert P\vert > B$, probe the next $k$-tuple $x$. 
    \item For each position $h \in P$, replace $x[h]$ by any element of $R_h$. 
\end{itemize}

Next, for each refined $k$-tuple $x=(\bm{x}_1,\ldots,\bm{x}_k)$, we construct a hypergraph $G$ whose vertices are in $\{1,2,\ldots,m\}$. For each $k$-tuple $y=(\bm{y}_1,\ldots,\bm{y}_k)$ of columns of $\bm{A}$, we add an edge $S\subseteq \{1,2,\ldots,m\}$ such that $h\in S$ if ${x}[h]\ne {y}[h]$. For all hypergraphs $H_0^*$ having at most $B$ vertices and at most $200\ln B$ edges, we check if each vertex of $H_0^*$ is contained in at least $1/4$ fraction of the edges. If that is the case, we use the algorithm of Theorem \ref{th:hypergraph} to find every place $P'$ where $H_0^*$ appears in $G$ as subhypergraph. For each such set $P'$, we perform all possible $B'\le B\cdot k$ edits of the tuple ${x}$ at the locations in $P'$. In particular, for each $B'$, the editing is done in the following way. For each possible $B'$ entries $(a_1,\ldots,a_{B'})$ in $(\bm{x}_{1},\ldots,\bm{x}_{k})$ at the locations in $P'$ and each set of $B'$ symbols $(s_1,\ldots,s_{B'})$ from $\Sigma$, we put $s_j$ at the entry $a_j$ for all $j$. After each such edit, we retrieve the edited tuple $(\bm{x}_{1},\ldots,\bm{x}_{k})$ and perform a sanity check on this tuple to ensure that it is a valid $k$-tuple center. In particular, for each index $1\le h\le m$, if there is a $z\in R_h$ such that $z=x[h]$, we tag $x$ as a valid tuple. Lastly, we output all such valid $k$-tuples as candidate centers if the corresponding cost of clustering is at most $B$. If no $k$-tuple is output as a candidate center, we return \say{NO}.   

Note that, given a $k$-tuple candidate center $z=(\bm{z}_1,\ldots,\bm{z}_k)$, one can compute a minimum cost clustering in the following greedy way, which implies that  we can correctly compute the cost of clustering in the above. At each step $i$, we assign a new column of $\bm{A}$ to a center. In particular, we add a column $\bm{a}^j$ of $\bm{A}$ to a cluster $I'_t$ such that $\bm{a}^j$ incurs the minimum cost over all unassigned columns if it is assigned to a center, i.e, it minimizes the quantity $\min_{t'=1}^k  d(\bm{a}^j,\bm{z}_{t'})$, and $\bm{z}_t$ is a corresponding center nearest to $\bm{a}^j$. As we are allowed to exclude $\ell$ outliers, we assign $n-\ell$ columns. The clustering $\{I'_1,\ldots,I'_k\}$ obtained at the end of this process is the output. This finishes the description of our  algorithm.     

\vspace{-2mm}

\subsection{Analysis}

Again consider the feasible clustering with partition $\mathcal{I}=\{I_1,I_2,\ldots,I_k\}$ and the corresponding center tuple $C=(\bm{c}_1,\bm{c}_2,\ldots,\bm{c}_k)$ having the minimum cost. 
First, we have the following observation. 

\begin{observation}\label{obs:refine}
Suppose for a $k$-tuple $x$, $x$ differs from $C$ at $B_1$ positions. Then, after refinement, there is at most $B_1$ positions $h$ such that $x[h]\ne C[h]$. Moreover, after refinement, $d_H(x,C)\le B_1\cdot k$. 
\end{observation}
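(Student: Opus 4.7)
The plan is to exploit two facts: (i) $C$ is a feasible center tuple, so $C[h] \in R_h$ for every $h \in \{1, \ldots, m\}$, and (ii) the refinement step modifies $x$ only at positions $h \in P$, replacing $x[h]$ by an element of $R_h$. Under the natural convention $d_H(x, C) = \sum_{j=1}^k d_H(\bm{x}_j, \bm{c}_j)$, both parts of the observation will fall out directly from these two facts.

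First I would handle the count of disagreement positions. Partition $\{1, \ldots, m\}$ into $P$ and its complement. For $h \notin P$, the coordinate $x[h]$ is untouched by refinement, so $x[h] \ne C[h]$ after refinement iff this held before. For $h \in P$, fact (i) gives $C[h] \in R_h$, while by the definition of $P$ we have $x[h] \notin R_h$ \emph{before} refinement, so $x[h] \ne C[h]$ already held before refinement. Consequently, every position of disagreement after refinement was already a position of disagreement before refinement, so the number of such positions is at most $B_1$, establishing the first claim.

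Next I would translate this position count into the bound $d_H(x,C) \le B_1 \cdot k$. At each position $h$ where $x[h] = C[h]$ after refinement, no coordinate of any $\bm{x}_j$ disagrees with $\bm{c}_j$, contributing $0$ to $\sum_{j=1}^k d_H(\bm{x}_j, \bm{c}_j)$. At each position where $x[h] \ne C[h]$, the two $k$-tuples can disagree in at most $k$ of their $k$ coordinates, contributing at most $k$. Combined with the first claim this yields $d_H(x,C) \le B_1 \cdot k$.

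I do not foresee any real obstacle here: both parts are straightforward bookkeeping about how the refinement interacts with the fixed reference tuple $C$. The only subtle point to keep in mind is that a position $h \in P$ may end up with $x[h] = C[h]$ after refinement (if the chosen representative in $R_h$ happens to match $C[h]$), but this only helps, since it strictly reduces the count of disagreements without ever creating a new one.
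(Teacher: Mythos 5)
Your proof is correct and follows essentially the same route as the paper: both arguments hinge on the observation that $C$ satisfies $\R$, so any position where $x$ disagrees with $\R$ already disagrees with $C$, hence refinement never creates a new disagreement position; the bound $d_H(x,C)\le B_1\cdot k$ then follows because each disagreeing position contributes at most $k$ coordinate mismatches. Your write-up is somewhat more explicit than the paper's brief justification, but the underlying ideas are identical.
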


The first part is true for the following reason. If $x$ was different from $\R$ at a position $h$, then $x[h]\ne C[h]$ as well. Thus, refinement is applied for a position $h$ where $x[h]$ already differs from $C[h]$. Hence, refinement does not affect a position $h$ where $x[h]=C[h]$. The moreover part follows trivially from the first part as $x$ is a $k$-tuple. Now, it is sufficient to prove the following lemma.

\begin{lemma}
Suppose there is a feasible clustering with partition $\mathcal{I}=\{I_1,I_2,\ldots,I_k\}$ as defined above. The above algorithm successfully outputs the $k$-tuple $(\bm{c}_1,\bm{c}_2,\ldots,\bm{c}_k)$. 
\end{lemma}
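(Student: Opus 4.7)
The strategy is to exhibit a specific starting $k$-tuple $x$ that the algorithm enumerates, and then trace the algorithm's steps to show that it eventually outputs $C=(\bm{c}_1,\ldots,\bm{c}_k)$. First, I would produce a favorable $x$ by choosing, for each $1 \le j \le k$, a column $\bm{x}_j \in I_j$ minimizing $d_H(\bm{x}_j,\bm{c}_j)$. By a simple averaging argument, $d_H(\bm{x}_j,\bm{c}_j) \le \cost(I_j)/|I_j| \le \cost(I_j)$, so $\sum_{j=1}^k d_H(\bm{x}_j,\bm{c}_j) \le \sum_j \cost(I_j) \le B$. Every position $h$ with $x[h] \ne C[h]$ witnesses some $j$ with $\bm{x}_j[h] \ne \bm{c}_j[h]$, so $x$ differs from $C$ in at most $B$ positions. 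Since the algorithm iterates over all $k$-tuples of columns of $\bm{A}$, this $x$ is among them.

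Next, I would verify that $x$ survives both refinement and the size check, and then apply the structural lemma. Any position $h$ where $x[h] \notin R_h$ forces $x[h] \ne C[h]$ (because $C[h] \in R_h$ by feasibility), so the set $P$ of such positions has size at most $B$ and $x$ is not pruned. By Observation~\ref{obs:refine}, after refinement $x$ still differs from $C$ in at most $B$ positions and now satisfies $x[h] \in R_h$ for all $h$. These are precisely the hypotheses of Lemma~\ref{lem:existenceofhypergraph}, which supplies a subhypergraph $H_0^*(V^*,E^*)$ of the auxiliary hypergraph $H$ with $|V^*| \le B$, $|E^*| \le 200 \ln B$, fractional cover number at most $4$, and $V^*$ equal to the set of positions where $x$ differs from $C$. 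The algorithm's hypergraph $G$ has an edge for every $k$-tuple of columns of $\bm{A}$, whereas the edges of $H$ come only from tuples in $T$, a subset of all such $k$-tuples; hence every edge of $H$ is also an edge of $G$, and $H_0^*$ appears as a subhypergraph of $G$ at $V^*$. Moreover, the proof of Lemma~\ref{lem:hg-wid-2nd-property} gives that each vertex of $H_0^*$ lies in at least a $1/4$-fraction of $E^*$, so $H_0^*$ passes the algorithm's edge-density check, and Theorem~\ref{th:hypergraph} guarantees that $V^*$ is returned among the enumerated locations $P'$.

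Finally, the algorithm tries all coordinate-level edits of $x$ at positions in $P'=V^*$, replacing up to $B' \le kB$ individual entries by arbitrary symbols of $\Sigma$. One of these edits replaces exactly the coordinates where $x$ and $C$ disagree on $V^*$ with the corresponding values from $C$, yielding the tuple $C$ itself. Since $C[h] \in R_h$ for every $h$, this tuple passes the sanity check, and the greedy assignment using $C$ as centers achieves cost at most $B$ by feasibility, so $C$ is output as a candidate center. The most delicate step is the first: without the averaging bound $\sum_j d_H(\bm{x}_j,\bm{c}_j) \le B$ we could not apply Lemma~\ref{lem:existenceofhypergraph} to the refined tuple. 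Once that bound is in hand, the remainder is a matter of lining up the algorithm's enumeration with the combinatorial structure guaranteed by the structural lemma.
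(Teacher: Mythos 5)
Your proposal is correct and follows essentially the same route as the paper's proof: pick one column per cluster so the resulting tuple is within Hamming distance $B$ of $C$, survive refinement via Observation~\ref{obs:refine}, invoke the structural lemma to locate the deviating positions via subhypergraph enumeration, and recover $C$ by editing. The only cosmetic difference is that you choose each $\bm{x}_j$ to minimize $d_H(\bm{x}_j,\bm{c}_j)$, whereas the paper observes that an arbitrary $\bm{x}_j\in I_j$ already satisfies $\sum_j d_H(\bm{x}_j,\bm{c}_j)\le B$ since the total clustering cost is at most $B$; both yield the same conclusion.
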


\begin{proof}
Consider a $k$-tuple $x=(\bm{x}_1,\ldots,\bm{x}_k)$ such that the column $\bm{x}_j$ is in cluster $I_j$. As the algorithm considers all possible $k$-tuples of columns in $\bm{A}$, it must consider $x$. 
By Observation \ref{obs:refine}, after refinement, there are at most $B$ positions $h$ where $x[h]\ne C[h]$. Also, for each $1\le j\le m$, $x[j]\in R_j$. Let $G$ be the hypergraph constructed by the algorithm corresponding to this refined $x$. Note that the hypergraph $H$ defined in Lemma \ref{lem:existenceofhypergraph} is a partial subhypergraph of $G$. Thus, the subhypergraph $H^*$ of $H$ is also a subhypergraph of $G$. As we enumerate all hypergraphs having at most $B$ vertices, at most $200\ln B$ edges and at most $4$ fractional covering number, $H^*$ must be considered by the algorithm. Let $P'$ be the place in $G$ where $H^*$ appears. By the third property of Lemma \ref{lem:existenceofhypergraph}, the locations in $P'$ are the exact positions $h$ such that ${x}[h]\ne C[h]$. It follows that an edit corresponding to $P'$ generates the tuple $C=(\bm{c}_1,\ldots,\bm{c}_{k})$, as $d_H(x,C)\le B\cdot k$. It is easy to see that $C$ also passes the sanity check. Hence, $C$ must be an output of the algorithm. 
\end{proof}

Given the tuple center $C=(\bm{c}_1,\ldots,\bm{c}_{k})$, we use the greedy assignment scheme (described in the algorithm) to find the underlying clustering. Note that given any $k$-tuple candidate center $z=(\bm{z}_1,\ldots,\bm{z}_k)$, this greedy scheme computes a minimum cost clustering with ${\bm{z}_1,\ldots,\bm{z}_k}$ being the cluster centers. Thus, the cost of the clustering computed by the algorithm is at most $B$. Hence, the algorithm successfully outputs $C$ as a candidate center. We summarize our findings in the following lemma. 

\begin{lemma}
Suppose the input instance is a \noinstance, then the  algorithm successfully outputs \say{NO}. If the input instance is a \yesinstance, the algorithm correctly computes a feasible clustering. 
\end{lemma}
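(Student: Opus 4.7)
The plan is to derive this closing correctness statement directly from the preceding lemma together with an optimality argument for the greedy assignment. The preceding lemma already handled the substantive work: for any fixed optimal partition $\mathcal{I}=\{I_1,\ldots,I_k\}$ with center tuple $C=(\bm{c}_1,\ldots,\bm{c}_k)$, the enumeration over $k$-tuples $x$ of columns, the refinement step, the subhypergraph $H^\star$ guaranteed by the Structural Lemma, and the edit step at the locations where $H^\star$ appears collectively produce $C$ among the candidate tuples. So all that remains is to connect "candidate output" with "feasible clustering'', and symmetrically to argue that in a \noinstance no candidate survives the cost check.

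For the \yesinstance direction, I would first observe that for any fixed $k$-tuple center $z=(\bm{z}_1,\ldots,\bm{z}_k)$, the greedy assignment scheme described in the algorithm computes a minimum-cost clustering with $\ell$ outliers and centers $z$. Indeed, the cost of any assignment that declares the set $O$ of outliers is $\sum_{j \notin O} \min_{t} d_H(\bm{a}^j,\bm{z}_t)$, so the optimal choice is simply to sort the columns by their distance to their nearest center and discard the $\ell$ columns with the largest such distance; the stepwise greedy procedure in the algorithm realises exactly this ordering. Combining this with the preceding lemma, when the optimal tuple $C$ is reached by the edit step, its computed cost is at most the cost of $\mathcal{I}$, which is at most $B$. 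Hence $C$ passes the cost check, is output, and the associated greedy clustering is feasible.

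For the \noinstance direction, the argument is immediate: the algorithm outputs a tuple $z$ only if the clustering it constructs for $z$ has cost at most $B$; by the previous paragraph this constructed cost equals the minimum cost of any $k$-clustering with $\ell$ outliers that uses $z$ as centers, so any such output would contradict the assumption that the instance is a \noinstance. Therefore no candidate is output and the algorithm correctly returns \say{NO}. The only subtlety I anticipate is making the greedy-optimality claim fully precise when several columns tie in distance, but ties can be broken arbitrarily without affecting the total cost, so this is a purely bookkeeping point rather than a genuine obstacle.
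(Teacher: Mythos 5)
Your proposal is correct and follows essentially the same route as the paper: the paper also derives this summary lemma from the preceding lemma (which guarantees the optimal center tuple $C$ is among the candidates) combined with the observation that the greedy assignment computes a minimum-cost clustering for any fixed center tuple, so that the cost check admits $C$ in a \yesinstance and rejects every candidate in a \noinstance. Your explicit justification of the greedy step (discarding the $\ell$ columns farthest from their nearest center) is slightly more detailed than the paper, which merely asserts this optimality, but the argument is the same.
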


\subsection{Time Complexity}
Next, we discuss the time complexity of our algorithm. The number of choices of $x$ is $n^{O(k)}$. For each choice of $x$, the hypergraph $G$ can be constructed in $n^{O(k)}$ time. The number of distinct hypergraphs $H_0^*$ with at most $B$ vertices and at most $200\ln B$ edges is $2^{B\cdot 200\ln B}=B^{O(B)}$, since there are $2^B$ possibilities for each edge. Now we analyze the time needed for locating a particular $H_0^*$ in $G$. For any tuple $y\in T$, $d_H(y,C) \le B$. By triangle inequality, $d_H(x,y)\le 2B$. Thus, the size of any edge in $H$ is at most $2B$, and we can remove any edge of $G$ of size more than $2B$. From Theorem \ref{th:hypergraph}, it follows that every occurrence of $H_0^*$ in $G$ can be found in $B^{O(B)}\cdot {(2B)}^{ {4B+1}}\cdot {n}^{4k+k}\cdot { m }^2=B^{O(B)}\cdot n^{O(k)}m^2$ time. If $H_0^*$ appears at some place in $G$, it would take $O((kB\vert \Sigma\vert)^{kB})$ time to edit $x$. Hence, in total the algorithm takes $(kB)^{O(kB)}{\vert \Sigma\vert}^{B}\cdot n^{O(k)}m^2$ time. By the above discussion, we have Theorem~\ref{theorem:main}.

\iffull
\thmmain*
\fi

\iffull

 \newcommand{\wrt}{with respect to\xspace}
\section{FPT Algorithm for \probClusteringWCO Parameterized by $B$}\label{sec:columnoutliers}In this section we give a proof of Theorem~\ref{theorem:WCO}. Let us recall that the theorem states that 
\emph{\probClusteringWCO is solvable  in time $2^{O(B\log B)}{\vert\Sigma\vert}^{B}(nm)^{O(1)}$.}

\medskip
 Let $(\bm{A},k,B,\ell)$ be an input of \probClusteringWCO, where $\bm{A}$ is an 
   $m\times n$ matrix,  
    $k$ the number of clusters,  $B$ the cost of the solution and $\ell$ the number of outliers.

Let $\beta$ be the number of pairwise distinct columns in $\bm{A}$. 
We start with partitioning the columns of $\bm{A}$ into sets of identical columns
  $\mathcal{J}=\{J_1,J_2,\ldots,J_{\beta}\}$. That is, all columns in each $J_i$ are equal.  We refer to such a set $J_i$ as an  
   \emph{initial cluster}.

We show that for any \yesinstance, there is a feasible solution  such that the columns of at most one 
initial cluster are ``split''  by the solution. More formally, consider a clustering $\mathcal{C}=\{I_1,I_2,\ldots,I_k\}$  of $\{1,2,\ldots,n\} \setminus O$. Then each of the initial clusters $J_i$ is exactly one of the following types \emph{\wrt $\mathcal{C}$}. 
\begin{itemize}
\item[(i)] $\{J_i\}$ such that there is $t$ with $J_i \subseteq I_t$, 
 \item[(ii)] $\{J_i\}$ such that $J_i$ is not of type (i), and $J_i \subseteq \cup_{j=1}^k I_j$, 
 \item[(iii)] $\{J_i\}$ such that $J_i \subseteq O$,
 \item[(iv)] $\{J_i\}$ such that $J_i \not\subseteq \cup_{j=1}^k I_j$,
  $J_i \not\subseteq O$, and there is $t$ with $J_i \subseteq I_t\cup O$, and 
  \item[(v)] $\{J_i\}$ such that $J_i \not\subseteq \cup_{j=1}^k I_j$, $J_i \not\subseteq O$, and there is no $t$ with $J_i \subseteq I_t\cup O$. 
 \end{itemize}

\begin{lemma}\label{lem:atmost1split} 
Let $(\bm{A},k,B,\ell)$ be a \yesinstance of  \probClusteringWCO. Then there is a set $O\subseteq \{1,2,\ldots,n\} $ of size $\ell$ and $k$-clustering 
$\mathcal{C}$  of $\{1,2,\ldots,n\} \setminus O$ of cost at most $B$ such that 
 every
initial cluster of $\bm{A}$  \wrt $\mathcal{C}$ is of type (i), (iii) or (iv). Moreover,  there is at most one initial cluster of type (iv).  
\end{lemma}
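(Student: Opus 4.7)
The plan is to take any feasible solution of cost at most $B$, fix its centers, and rearrange the clustering and outlier set into the desired structure while the cost can only decrease. Let $(O^*, \{I_1^*, \ldots, I_k^*\}, \{\bm{c}_1^*, \ldots, \bm{c}_k^*\})$ be such a feasible solution, and for each initial cluster $J_i$ let $\bm{v}_i$ denote the common value of its columns and set $d_i^* = \min_t d_H(\bm{v}_i, \bm{c}_t^*)$. A first observation is that the cost of the original solution is at least $\sum_{j \notin O^*} d_{i(j)}^*$, where $i(j)$ denotes the index of the initial cluster containing column $j$, since each non-outlier's contribution is bounded below by the distance from its column to the nearest of the fixed centers.

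With the centers held fixed, I would construct a new solution as follows. Order the initial clusters so that $d_1^* \ge d_2^* \ge \cdots \ge d_\beta^*$, and list all columns of $\bm{A}$ in this order, those of $J_1$ first, then those of $J_2$, and so on. Let $O$ consist of the first $\ell$ columns in this list. Assign every remaining column to a cluster whose center achieves the minimum in the definition of $d_{i(j)}^*$, breaking ties by a rule that depends only on $i(j)$, so that all non-outlier columns of any one $J_i$ end up in the same $I_t$. The cost of the new solution equals $\sum_{j \notin O} d_{i(j)}^*$; since $O$ picks the $\ell$ largest values from the multiset $\{d_{i(j)}^*\}_{j=1}^n$, this is at most $\sum_{j \notin O^*} d_{i(j)}^* \le B$ by the first observation (padding $O^*$ arbitrarily to size $\ell$ if it was smaller).

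The structural claim now follows by construction. Let $i_0$ be the index of the initial cluster containing the $\ell$-th column of the enumerated list. Then $J_1, \ldots, J_{i_0 - 1}$ lie wholly inside $O$ (type (iii)), $J_{i_0}$ is split between $O$ and the single $I_t$ picked above (type (iv), occurring at most once because the cutoff can straddle at most one initial cluster), and $J_{i_0 + 1}, \ldots, J_\beta$ each lie fully inside one $I_t$ (type (i)); no type (ii) or type (v) cluster appears, because the tie-breaking rule sends every non-outlier portion of an initial cluster to a single $I_t$. The main obstacle I foresee is the bookkeeping around the lemma's incidental requirements that $|O|$ equal $\ell$ exactly and that all $k$ clusters be non-empty: the former is automatic once $n \ge \ell$ (otherwise the instance is trivial), and the latter is settled by the preliminary convention of the paper, which permits reinterpreting the result as a $k'$-clustering with $k' \le k$ whenever some center ends up unused.
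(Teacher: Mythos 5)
Your proof is correct, but it takes a genuinely different route from the paper's. The paper fixes a feasible solution and applies an extremal exchange argument: among all cost-$B$ clusterings it picks one lexicographically minimizing the number of type (ii), then type (v), then type (iv) initial clusters, and derives a contradiction from each violation by locally moving one initial cluster (or trading columns between two type (iv) clusters of unequal center-distance). You instead keep only the \emph{centers} of the feasible solution and rebuild the assignment from scratch: every column goes to its nearest center (with a tie-break constant on each initial cluster), and the outliers are the $\ell$ columns whose nearest-center distance $d^*_{i(j)}$ is largest, taken in an order that groups each initial cluster contiguously. The cost bound follows from the two clean inequalities you state, and the structural claim is immediate from the contiguous cutoff. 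Your argument is arguably more transparent and proves something slightly stronger --- that the canonical ``nearest-center plus farthest-$\ell$-outliers'' solution for the given centers already has the desired form --- whereas the paper's exchange argument avoids reasoning about global optimality of the assignment and only needs local moves. One cosmetic point shared by both proofs: reassigning columns to nearest centers (or the paper's wholesale moves of initial clusters) can leave some of the $k$ clusters empty, which both you and the paper dispose of by the stated convention of probing smaller $k'$; your explicit acknowledgment of this and of the $|O|=\ell$ padding is appropriate.
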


\begin{proof}
Because $(\bm{A},k,B,\ell)$ be a \yesinstance, there is a set of outliers $O$ and a $k$-clustering of the remaining vectors of cost at most $B$. Among all such clusterings, we select a clustering  $\mathcal{C}=\{I_1,I_2,\ldots,I_k\}$    of $\{1,2,\ldots,n\} \setminus O$ of cost at most $B$ such that \wrt $\mathcal{C}$
\begin{equation}\label{eq:initcl1} \text{the number of initial clusters of type (ii) is minimum}, 
\end{equation}
subject to \eqref{eq:initcl1}
\begin{equation}\label{eq:initcl2} \text{the number of initial clusters of type (v) is minimum}, 
\end{equation}
and 
subject to \eqref{eq:initcl1} and  \eqref{eq:initcl2}, 
\begin{equation}\label{eq:initcl3} \text{the number of initial clusters of type (iv) is minimum}.
\end{equation}
We claim that $\mathcal{C}$ is the required clustering. Targeting towards a contradiction, assume first 
 that  there is an initial cluster of $\mathcal{J}$ of  type (ii) \wrt $\mathcal{C}$. Without loss of generality, let this cluster be $J_1$. 
Then  $J_1$ is not fully contained in any cluster  of  $\mathcal{C}$, but it is fully contained in the union of all clusters. We convert  $J_1$ into a type (i) initial cluster by the following modification of  $\mathcal{C}$. 
Let $\bm{c}_i$ be the center of  cluster $I_i$ for  $i\in\{1,\dots,k\}$. Also, let $\bf{a}$ be the unique column corresponding to $J_i$ (the whole initial cluster consists of columns equal to $\bf{a}$). Let $\bm{c}_{q}$ be a center that minimizes the cost
 $d_H({\bf{a}},{\bm{c}}_i)$ over all centers ${\bm{c}}_i$. We construct a new clustering $\mathcal{C}'$  by  reassigning the columns of $J_1$ to the cluster $I_{q}$. Thus in the new clustering $J_1$ becomes of type (i). The assignment of all other initial clusters remains the same. 
Since $d_H({\bf{a}},{\bm{c}}_q)\leq  d_H({\bf{a}},{{\bm{c}}_1})$, we have that the cost of the new clustering $\mathcal{C}'$ is at most $B$, thus contradicting assumption  \eqref{eq:initcl1}. Hence, there are no initial clusters of $\mathcal{J}$ of  type (ii) \wrt $\mathcal{C}$.

 By applying the same arguments  to the non-outlier part of a type (v) initial cluster \wrt $\mathcal{C}$.  The only difference that now type (v) initial cluster becomes type (iv)  thus contradicting assumption  \eqref{eq:initcl2}.
 
 Finally, we already know that there are no  types (ii) and (v) initial clusters \wrt  $\mathcal{C}$. Assume that there are at least two, say $J_1$ and $J_2$, initial clusters of type (iv) \wrt $\mathcal{C}$.
Then, there are clusters $C_{i_1}$ and $C_{i_2}$ in $\mathcal{C}$ such that $J_1 \subseteq C_{i_1}\cup O$ and $J_2 \subseteq C_{i_2}\cup O$.  (We do not exclude the possibility of $i_1=i_2$ here.) Let ${\bf{c}}_{i_1}$  and  ${\bf{c}}_{i_2}$ be the centers of  $C_{i_1}$ and $C_{i_2}$ respectively.
Let also  ${\bf{a}}_1$ and   $\bf{a}_2$  be the unique columns corresponding to $J_1$ and  $J_2$. 
Without loss of generality, we assume that  $d_H({\bf{a}}_1,{\bf{c}}_{i_1})) \le d_H({\bf{a}}_2,{\bf{c}}_{i_2}))$. There are two cases: (1) $\vert J_2\cap C_{i_2}\vert \le \vert J_1\cap O\vert$, and (2) $\vert J_2\cap C_{i_2}\vert > \vert J_1\cap O\vert$. In the first case, we can assign additional $\vert J_2\cap C_{i_2}\vert$ columns of $J_i$ to $C_{i_1}$, and move  $\vert J_t\cap C_{i_2}\vert$ columns of $J_2$ from $C_{i_2}$ to $O$. The cost increase is $\vert J_2\cap C_{i_2}\vert \cdot (d_H({\bf{a}}_1,{\bf{c}}_{i_1}) - d_H({\bf{a}}_2,{\bf{c}}_{i_2}))\le 0$. Also, type of $J_2$ is now changed to (iii) \wrt the new clustering. Thus, the number of type (iv) initial clusters has decreased by at least 1 and no type (ii) or (iv) clusters were created. 
For  the second case, assign the $\vert J_1\cap O\vert$ columns of $J_1$ from $O$ to $C_{i_1}$, and move $\vert J_1\cap O\vert$ more columns of $J_2$ from $C_{i_2}$ to $O$. The cost increase is $\vert J_1\cap O\vert \cdot (d_H({\bf{a}}_1,{\bf{c}}_{i_1}) - d_H({\bf{a}}_2,{\bf{c}}_{i_2}))\le 0$. Also, type of $J_1$ is now changed to (i) \wrt the new clustering. Thus, in this case also, the number of type (iv) initial clusters has decreased by 1. 
In both cases we constructed  clusterings contradicting  \eqref{eq:initcl3}. Hence there is at most one initial cluster of type (iv) \wrt 
 $\mathcal{C}$. 
\end{proof}

Now, we describe our algorithm. The algorithm tries to find a feasible clustering (if any) that satisfies the property of Lemma \ref{lem:atmost1split}. As there is at most one initial cluster that can get split, we can guess it in advance. Suppose $J_t$ be this initial cluster. 
We can also guess the number of columns $\ell'$ of $J_t$ that would be part of the outliers set. We construct a new instance of the problem which contains all the columns in $J_i$ for $1\le i\le \beta$ and $i\ne t$, and exactly $|J_t|-\ell'$ columns of $J_t$. 
Let $\bm{A}'$ be the matrix $\bm{A}$ containing these columns. Then, our new instance is $(\bm{A}',k,B,\ell-\ell')$. The advantage of working with this instance is that in the desired clustering no initial cluster gets split. Next, we show how to find such a clustering for this instance. 

For simplicity of notations, let us denote the input instance by $(\bm{A},k,B,\ell)$. As no initial cluster in the desired clustering gets split, an initial cluster can either form its own cluster, or becomes part of the outliers, or gets merged with other initial clusters to form a new cluster. We refer to the last type of clusters as composite clusters. Now, we have the following simple observation.

\begin{observation}\label{obs:noofcompclusters}
In a feasible clustering, the total number of initial clusters that can be part of the composite clusters is at most $2B$. The total number of composite clusters is at most $B$.
\end{observation}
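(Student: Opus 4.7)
The plan is to establish both bounds by a single cost-accounting argument on composite clusters. The key observation is that, by the definition of initial clusters as maximal groups of identical columns, any two columns drawn from distinct initial clusters are themselves distinct vectors in $\Sigma^m$. Hence, if a composite cluster is formed by merging $t \ge 2$ initial clusters with unique column representatives $\bm{a}_1,\ldots,\bm{a}_t \in \Sigma^m$ (pairwise distinct) and has center $\bm{c} \in \Sigma^m$, then at most one of the $\bm{a}_s$ can coincide with $\bm{c}$. Consequently, at least $t-1$ of the representatives satisfy $d_H(\bm{a}_s,\bm{c}) \ge 1$, and since each initial cluster contains at least one column, the contribution of this composite cluster to the total clustering cost is at least $t-1$.

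Next I would sum this lower bound over all composite clusters. Let $q$ denote the number of composite clusters in a feasible clustering, and let $t_1,\ldots,t_q \ge 2$ denote the numbers of initial clusters merged into them respectively. The total clustering cost is then at least $\sum_{j=1}^{q}(t_j - 1) = \bigl(\sum_{j=1}^{q} t_j\bigr) - q$. Since this total cost is bounded by $B$, we obtain the central inequality $\sum_{j=1}^{q} t_j \le B + q$. Combining this with the trivial bound $\sum_{j=1}^{q} t_j \ge 2q$ (because each $t_j \ge 2$) yields $2q \le B + q$, so $q \le B$, which proves the second part of the observation. Plugging $q \le B$ back into $\sum_j t_j \le B + q$ gives $\sum_{j=1}^{q} t_j \le 2B$, establishing the first part.

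The main (and essentially only) subtle point to handle carefully is the lower bound of $t-1$ on the cost of a composite cluster, which hinges on the pairwise distinctness of representatives from distinct initial clusters together with the fact that no initial cluster is split in the reduced instance. Everything else is elementary counting, so no further structural arguments are required.
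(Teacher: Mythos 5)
Your proof is correct and follows essentially the argument the paper intends (the paper states this as a ``simple observation'' without writing out a proof): distinct initial clusters have distinct representative columns, at most one representative per composite cluster can coincide with its center, so each composite cluster of $t_j$ initial clusters costs at least $t_j-1\ge 1$, giving $q\le B$ and $\sum_j t_j\le B+q\le 2B$. The cost-accounting details you supply are sound and complete.
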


Next, we use the color coding technique of \cite{AlonYZ95} to randomly color the initial clusters by one of the colors in $\{1,2,\ldots,2B\}$. The idea is that with sufficient probability the initial clusters that form the composite clusters get colored by pairwise distinct colors. Thus, assuming this event occurs, it is sufficient to select only one initial cluster from each color class to find out such initial clusters. However, we do not know how to combine those initial clusters. Nevertheless, we can guess such combination by trying all possible partitions of $\{1,2,\ldots,2B\}$ that contains at most $B$ parts. For each part, we can also guess the cost of the corresponding composite cluster. By Observation \ref{obs:noofcompclusters}, the selected initial clusters corresponding to each part form a composite cluster. Each initial cluster which is not selected would either form its own cluster or become part of the outliers. However, an initial cluster that forms its own cluster does not incur any cost. Hence, any subsets of the appropriate number of non-selected initial clusters can form their own clusters without incurring any extra cost. However, we need to ensure that the number of outliers is at most $\ell$. To ensure this, we assign the non-selected initial clusters to the outliers set $O$ in non-decreasing order of their cardinalities. This makes sure that we can pack maximum number of initial clusters into $O$. 

In the light of the above discussion, the problem boils down to the following Restricted Clustering problem with exactly one cluster. Let $S_j$ be the set of indices of the initial clusters colored by the color $j$ for all $1\le j\le 2B$. Also, denote the unique column in $J_j$ by $\bm{b^j}$ for $1\le j\le \beta$, and let the weight of $\bm{b^j}$, $w_j=|J_j|$. Fix a partition $T_1,T_2,\ldots,T_{\tau}$ of $\{1,2,\ldots,2B\}$. Also, guess $\tau$ positive integers $B_1,B_2,\ldots,B_{\tau}$ such that $\sum_{i=1}^{\tau} B_i=B$. $B_i$ is the guess for the cost of the $i^{th}$ composite cluster. 

Fix a part $i$ of the partition, where $1\le i\le \tau$. Let the set of color indices $T_i=\{i_1,i_2,\ldots,i_p\}$. For each $1\le t\le p$, let $U_t^i=\cup_{j\in S_{i_t}} \bm{b^j}$, i.e the set of unique columns of the initial clusters which are colored by color $i_t$. The Restricted Clustering problem that we need to solve is the following. We are given the sets of columns $U_1^i,U_2^i,\ldots,U_p^i$ and a  parameter $B_i$. The goal is to select $p$ columns $\bm{b^{j_1}},\bm{b^{j_2}},\ldots,\bm{b^{j_p}}$ and a cluster center $\bm{s}^i$ such that $\bm{b^{j_t}}\in U_t^i$ for $1\le t\le p$ and $\sum_{t=1}^{p} w_{j_t}\cdot d_H(\bm{b^{j_t}},\bm{s}^i) \le B_i$.  

Next, we will show that Restricted Clustering can be solved in $2^{O(B_i\log B_i)}{\vert\Sigma\vert}^{B_i}(nm)^{O(1)}$ time using the hypergraph based technique of Marx \cite{Marx08}.

\subsection{Solving Restricted Clustering}
For simplicity, we drop the suffix $i$ from all the notations. Thus, now we are given the sets of columns $U_1,U_2,\ldots,U_p$ and a  parameter $B$. Let $\bm{c}$ be a center corresponding to an optimum feasible clustering $\mathcal{I}$ with the set  $B=\{\bm{b^{j_1}},\bm{b^{j_2}},\ldots,\bm{b^{j_p}}\}$ of chosen columns. Also, let $U=\cup_{t=1}^p U_t$. 

Suppose $\bm{x}$ is a vector such that 
there are at most $B$ positions $h$ with $\bm{x}[h]\ne \bm{c}[h]$. Consider the hypergraph $H$ defined in the following way with respect to $\bm{x}$. The labels of the vertices of $H$ are in $\{1,2,\ldots,m\}$. For each $\bm{b^{j_t}}\in B$, we add $w_{j_t}$ copies of the edge $S\subseteq \{1,2,\ldots,m\}$ such that $h\in S$ if $\bm{x}[h]\ne \bm{y}[h]$. 

\begin{lemma}\label{lem:existenceofhypergraphrestricted}
Suppose the input is a \yesinstance. Consider a vector $\bm{x}$ such that there is at most $B$ positions $h$ where $\bm{x}[h]\ne \bm{c}[h]$. Also, consider the hypergraph $H$ defined in the above with respect to $\bm{x}$. There exists a subhypergraph $H_0^*(V_0^*,E_0^*)$ of $H$ with the following properties.
\begin{enumerate}
    \item $\vert V_0^*\vert \le B$.
    \item $\vert E_0^*\vert \le 200\ln B$.
    \item The indices in $V_0^*$ are the exact positions $h$ such that $\bm{x}[h]\ne \bm{c}[h]$. 
    \item The fractional cover number of $H_0^*$ is at most $4$.
\end{enumerate}
\end{lemma}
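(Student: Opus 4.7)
\textbf{Proof Plan for Lemma~\ref{lem:existenceofhypergraphrestricted}.}
The plan is to mirror the two-step structure that established Lemma~\ref{lem:existenceofhypergraph} in the constrained clustering setting: first exhibit a natural subhypergraph of $H$ induced on the deviating positions whose fractional cover number is bounded by a small absolute constant (without worrying about the edge count), and then shrink the edge set via random sampling while preserving both the vertex set and a slightly worse fractional cover bound.

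First I would let $P \subseteq \{1,\dots,m\}$ be the exact set of positions where $\bm{x}[h] \ne \bm{c}[h]$, and let $H_0(V_0, E_0)$ be the subhypergraph of $H$ obtained by restricting every edge to $P$ (so $V_0 = P$). Properties~(1) and~(3) are then immediate from the hypothesis $d_H(\bm{x}, \bm{c}) \le B$. The key structural step is to prove that $H_0$ has fractional cover number at most $2$, which I would argue exactly by uniform weight $2/W$ on every edge, where $W = \sum_t w_{j_t}$ is the total multiplicity of edges in $H$. To justify that this is a valid fractional cover, fix $h \in P$ and suppose for contradiction that fewer than $W/2$ edges (counted with multiplicity) contain $h$. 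Then the weighted mass of indices $t$ with $\bm{b}^{j_t}[h] = \bm{x}[h]$ strictly exceeds $W/2$. Since the Hamming cost decomposes coordinatewise, by optimality of $\bm{c}$ the value $\bm{c}[h]$ is a weighted mode of $\{\bm{b}^{j_t}[h]\}_t$, and so its matching mass is at least that of $\bm{x}[h]$, namely strictly greater than $W/2$. But $\bm{c}[h] \ne \bm{x}[h]$ (because $h \in P$), and the matching sets for these two distinct values are disjoint, so both masses exceeding $W/2$ contradicts that they sum to at most $W$. Hence at least half of all (multiplicity-counted) edges of $H_0$ contain $h$, and the uniform fractional cover works.

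Second, I would repeat the Chernoff sampling argument from Lemma~\ref{lem:hg-wid-2nd-property} verbatim: keep each of the $W$ edges of $H_0$ independently with probability $150 \ln B / W$. The expected number of surviving edges is $150 \ln B$, so by Theorem~\ref{th:chernoff} with $\beta = 1/3$ the probability of exceeding $200 \ln B$ is at most $1/B^2$; similarly, the expected coverage of any fixed vertex is at least $75 \ln B$, so the probability of dropping below $50 \ln B$ coverage at a particular vertex is at most $1/B^3$. A union bound over the at most $B$ vertices yields a positive probability of a surviving hypergraph $H_0^*$ that has at most $200 \ln B$ edges and in which every vertex is covered by at least $50 \ln B$ of them, giving fractional cover number at most $4$. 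The vertex set is unchanged, so properties~(1) and~(3) still hold, and we pick up~(2) and~(4).

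The only subtle point, and what I would flag as the main obstacle, is establishing the fractional cover bound for $H_0$ in the weighted setting. Unlike the constrained clustering argument of Lemma~\ref{lem:existenceofhypergraph}, where the Cartesian product structure of $T = I_1 \times \cdots \times I_k$ made the mode-style argument transparent, here we must argue directly in terms of the weights $w_{j_t}$ and the multiplicity with which edges enter $H$. The clean resolution is the disjointness observation above: $\bm{c}[h]$ and $\bm{x}[h]$ cannot both match a strict majority of the weighted columns, so optimality of $\bm{c}$ forces the majority to side with $\bm{c}[h]$, which is precisely the statement that a majority of (multiplicity-weighted) edges contain $h$. Once this is in place, the rest of the argument is a direct transcription of the sampling step, and Lemma~\ref{lem:existenceofhypergraphrestricted} follows.
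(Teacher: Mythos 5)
Your proposal is correct and takes essentially the same route as the paper: restrict $H$ to the set $P$ of deviating coordinates to get $H_0$, show every vertex of $P$ lies in at least half of the (multiplicity-weighted) edges via the coordinatewise exchange argument against the optimality of $\bm{c}$, and then sparsify to $200\ln B$ edges by the same Chernoff sampling with the same constants. Your ``weighted mode plus disjointness'' phrasing of the fractional cover bound is just a repackaging of the paper's explicit cost-change computation when $\bm{c}[h]$ is replaced by $\bm{x}[h]$, so there is no substantive difference.
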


To prove the above lemma, first, we show the existence of a subhypergraph that satisfies all the properties except the second one. Let $P$ be the set of positions $h$ such that $\bm{x}[h]\ne \bm{c}[h]$. Let $H_0(V_0,E_0)$ be the subhypergraph of $H$ induced by $P$. We show that $H_0$ satisfies the first, third and fourth properties mentioned in Lemma \ref{lem:existenceofhypergraphrestricted}. 

By definition of $\bm{x}$, $P$ contains at most $B$ indices. The first property follows immediately. Next, we show that the third property also follows for $H_0$. As $V_0 \subseteq P$, in every position $h\in V_0$, $\bm{x}[h]\ne \bm{c}[h]$. It is sufficient to show that for any position $h$ with $\bm{x}[h]\ne \bm{c}[h]$, $h\in V_0$. Consider any such position $h$. Then, if there is at least one $y\in B$ such that $\bm{x}[h]\ne \bm{y}[h]$, there is an edge $e$ in $H$ that contains $h$. Now, $h \in P$ by definition. Thus, the edge $e'=e\cap P \in H_0$ also contains $h$. Hence, $h\in V_0$. In the other case, for all $y\in B$, $\bm{x}[h]= \bm{y}[h]$. However, this case does not occur. Otherwise, we can replace the symbol $\bm{c}[h]$ by $\bm{x}[h]$. The new center $\bm{c}$ would incur lesser cost, as now $\bm{c}[h]= \bm{y}[h]$ for all $\bm{y}\in B$ and previously $\bm{c}[h]\ne \bm{y}[h]$ for all $\bm{y}\in B$. Next, we show that the fourth property holds for $H_0$.  

\begin{lemma}
The fractional cover number of $H_0$ is at most $2$.
\end{lemma}

\begin{proof}
Note that the total number of edges of $H_0$ is $\tau=\sum_{t=1}^p w_{j_t}$. We claim that each vertex of $H_0$ is contained in at least  $\tau/2$  edges. 

Consider any vertex $h$ of $H_0$. Let $h$ be contained in $\delta\tau$ edges. Thus, by definition, $(1-\delta)\tau $ vectors of $B$ (with multiplicity equal to their weights) do not differ from $\bm{x}$ at location $h$. Consider replacing $\bm{c}[h]$ by $\bm{x}[h]$ at position $h$ of $\bm{c}$. Next, we analyze the change in cost of the clustering $\mathcal{I}$. Note that the cost corresponding to positions other than $h$ remains same. As $\bm{c}[h]\ne \bm{x}[h]$, previously each of the $(1-\delta)\tau$ columns were paying a cost of at least $1$ corresponding to position $h$. Now, as $\bm{c}[h]$ is replaced by $\bm{x}[h]$ and consequently $\bm{c}[h]=\bm{x}[h]$, for these columns, the cost decrease is at least $(1-\delta)\tau$. Now, the cost increase for the remaining $\delta\tau$ columns is at most $\delta\tau$.

As $\mathcal{I}$ is an optimum feasible clustering, the cost decrease must be at most the cost increase. It follows that $\delta\ge 1/2$. 
\end{proof}

So far we have proved the existence of a subhypergraph that satisfies all the properties except the second. Next, we prove the existence of a subhypergraph that satisfies all the properties. The proof of the following lemma is very similar to the proof of Lemma \ref{lem:hg-wid-2nd-property}.

\begin{lemma}\label{lem:hg-wid-2nd-property-restricted}
Let $B \ge 2$. Consider the subhypergraph $H_0$ that satisfies all the properties of Lemma \ref{lem:existenceofhypergraphrestricted} except (2). It is possible to select at most $200\ln B$ edges from $H_0$ such that the subhypergraph $H_0^*$ obtained by removing all the other edges from $H_0$ satisfies all the properties of Lemma \ref{lem:existenceofhypergraphrestricted}. 
\end{lemma}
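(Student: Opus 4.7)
The plan is to mirror the sampling argument used in the proof of Lemma~\ref{lem:hg-wid-2nd-property}, since the setup here is essentially identical up to the fact that edges carry weights (multiplicities $w_{j_t}$). Let $\tau = \sum_{t=1}^p w_{j_t}$ denote the total number of edges of $H_0$ counted with multiplicity; by the previous lemma every vertex of $H_0$ lies in at least $\tau/2$ edges. Construct $H_0^*$ by sampling each edge of $H_0$ independently with probability $q = 150\ln B / \tau$. This preserves the vertex set $V_0$, so properties (1) and (3) of Lemma~\ref{lem:existenceofhypergraphrestricted} are automatic, and it only remains to verify properties (2) and (4) hold with positive probability.

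For property (2), the expected number of sampled edges is exactly $150\ln B$, so by the multiplicative Chernoff bound (Theorem~\ref{th:chernoff}) with $\beta = 1/3$, the probability that more than $200\ln B$ edges are sampled is at most $\exp(-150\ln B / 27) < 1/B^2$. For property (4), fix any vertex $v \in V_0$; the expected number of sampled edges covering $v$ is at least $(\tau/2)\cdot q = 75\ln B$, and Chernoff with $\beta = 1/3$ yields that $v$ is covered by fewer than $50\ln B$ sampled edges with probability at most $\exp(-75\ln B/18) < 1/B^3$. Taking a union bound over the at most $B$ vertices of $V_0$, the probability that some vertex is covered by fewer than $50\ln B$ edges is at most $1/B^2$.

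Combining the two failure probabilities with a union bound, for $B \ge 2$, with positive probability $H_0^*$ simultaneously has at most $200\ln B$ edges and has every vertex covered by at least $50\ln B$ of its edges. In that case assigning weight $1/(50\ln B)$ to every edge yields a fractional cover of value at most $(200\ln B)/(50\ln B) = 4$, so property (4) holds. Thus some outcome of the sampling gives a subhypergraph $H_0^*$ with all four properties, and derandomization is not needed for the lemma statement, which is purely existential.

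I do not anticipate any real obstacle: the only minor point to be careful about is that $H_0$ is treated as a multiset of edges (each copy of an edge for $\bm{b}^{j_t}$ counted $w_{j_t}$ times), so the sampling is performed on multiplicities rather than on distinct edges. This is precisely what makes the lower bound $\tau/2$ on the cover count per vertex carry through to the sampled hypergraph, exactly as in the unweighted case. Everything else follows the same Chernoff template as Lemma~\ref{lem:hg-wid-2nd-property}.
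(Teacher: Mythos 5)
Your proposal is correct and follows essentially the same sampling-plus-Chernoff argument that the paper uses for Lemma~\ref{lem:hg-wid-2nd-property}, with the same constants and the same union bound; the paper itself only remarks that the proof of this weighted variant is ``very similar'' to that one. Your explicit handling of the edge multiplicities $w_{j_t}$ (sampling over the multiset so that the $\tau/2$ coverage bound carries through) is exactly the right adaptation and matches the intended argument.
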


\subsubsection{The Algorithm}

We consider all possible columns $\bm{x}\in U$ and for each of them construct a hypergraph $G$ whose vertices are in $\{1,2,\ldots,m\}$. For each column $\bm{b}_j\in U$, we add $w_j$ copies of the edge $S\subseteq \{1,2,\ldots,m\}$ such that $h\in S$ if $\bm{x}[h]\ne \bm{b}_j[h]$. For all hypergraphs $H_0$ having at most $B$ vertices and at most $200\ln B$ edges, we check if each vertex of $H_0$ is contained in at least $1/4$ fraction of the edges. If that is the case, we use the algorithm of Theorem \ref{th:hypergraph} to find every place $P'$ where $H_0$ appears in $G$ as subhypergraph. For each such set $P'$, we perform all possible $B'\le B$ edits in $\bm{x}$ at the locations in $P'$. After each such edit, we retrieve the edited vector $\bm{x}$ and construct a clustering considering $\bm{x}$ as its center in the following way. For each $1\le t\le p$, we select a vector $\bm{b}_j$ from $U_t$ such that $d_H(\bm{x},\bm{b}_j)$ is the minimum over all $\bm{b}_j\in U_t$. If the cost of this clustering is at most $B$, we output $\bm{x}$ as a candidate center. If no such vector is output as a candidate center, we return \say{NO}.   

Next, we analyze the correctness and time complexity of this algorithm. 
\subsubsection{The Analysis}
            
\begin{lemma}
The above algorithm successfully outputs $\bm{c}$ as a candidate center. 
\end{lemma}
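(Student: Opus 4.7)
The plan is to exhibit an execution path of the algorithm that produces $\bm{c}$ as an output. Fix any column from the optimum selection, say $\bm{x} := \bm{b}^{j_1} \in U_1 \subseteq U$; the outer loop of the algorithm will at some point consider this $\bm{x}$. Since the optimum has cost at most $B$, in particular $w_{j_1}\cdot d_H(\bm{x}, \bm{c}) \leq B$, so $d_H(\bm{x}, \bm{c}) \leq B$, and Lemma~\ref{lem:existenceofhypergraphrestricted} applies. It yields a subhypergraph $H_0^*(V_0^*,E_0^*)$ of the hypergraph $H$ built from $B$ with respect to $\bm{x}$, where $V_0^*$ is exactly the set of positions on which $\bm{x}$ and $\bm{c}$ disagree, $|V_0^*|\le B$, $|E_0^*|\le 200\ln B$, and the fractional cover number of $H_0^*$ is at most $4$.

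Next I would transfer this subhypergraph from $H$ to $G$. By construction the hyperedges of $G$ are defined by the same rule as those of $H$, only ranging over all columns of $U$ instead of the subset $B \subseteq U$. Hence $H$ is a partial sub-hypergraph of $G$, and the occurrence of $H_0^*$ in $H$ at $V_0^*$ persists as an occurrence in $G$ at the same vertex set $V_0^*$. Because the algorithm enumerates every abstract hypergraph with at most $B$ vertices and at most $200\ln B$ edges and then filters by the fractional cover condition, the isomorphism type of $H_0^*$ is among those tried, and Theorem~\ref{th:hypergraph} guarantees that the algorithm locates every occurrence of $H_0^*$ in $G$ as a subhypergraph, in particular the one at $V_0^*$.

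Once $P' := V_0^*$ is discovered, the algorithm performs every possible reassignment of symbols at positions of $P'$. Since $\bm{x}$ and $\bm{c}$ already agree outside $P'$, and since all $|\Sigma|^{|P'|}$ assignments on $P'$ are tried, one of the resulting edited vectors coincides with $\bm{c}$. For this candidate the algorithm constructs the clustering by selecting, for each $U_t$, a column minimizing the contribution to the cost; this is bounded above by the contribution of $\bm{b}^{j_t}$, so the resulting total cost is at most $\sum_{t=1}^{p} w_{j_t}\cdot d_H(\bm{b}^{j_t},\bm{c}) \leq B$. Thus $\bm{c}$ passes the cost test and is output as a candidate center. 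The only step that requires care is the transfer of $H_0^*$ from $H$ to $G$, which is just the observation that adding extra hyperedges to a hypergraph cannot destroy existing subhypergraph occurrences; everything else follows by direct invocation of Lemma~\ref{lem:existenceofhypergraphrestricted} and Theorem~\ref{th:hypergraph}.
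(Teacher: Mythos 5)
Your proposal is correct and follows essentially the same route as the paper's proof: pick a column $\bm{x}$ from the optimal selection, apply Lemma~\ref{lem:existenceofhypergraphrestricted}, observe that $H$ is a partial (sub)hypergraph of $G$ so the occurrence of $H_0^*$ persists, and conclude that the enumeration plus the exhaustive edit at $P'=V_0^*$ recovers $\bm{c}$. The only differences are cosmetic: you justify $d_H(\bm{x},\bm{c})\le B$ explicitly via the weight bound and spell out the final cost check, both of which the paper leaves implicit.
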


\begin{proof}
Consider any vector $\bm{x}\in B$. Note that 
there is at most $B$ positions $h$ where $\bm{x}[h]\ne \bm{c}[h]$. Consider the hypergraph $G$ constructed by the algorithm corresponding to $\bm{x}$. Note that the hypergraph $H$ defined in Lemma \ref{lem:existenceofhypergraphrestricted} is a partial subgraph of $G$. Thus, the subhypergraph $H_0^*$ of $H$ is also a subhypergraph of $G$. As we enumerate all hypergraphs having at most $B$ vertices, at most $200\ln B$ edges and at most $4$ fractional covering number, $H_0^*$ must be considered by the algorithm. Let $P'$ be the place in $G$ where $H_0^*$ appears. By the third property of Lemma \ref{lem:existenceofhypergraphrestricted}, the locations in $P'$ are the exact positions $h$ such that $\bm{x}[h]\ne \bm{c}[h]$. It follows that an edit corresponding to $P'$ generates the vector $\bm{c}$, as $d_H(\bm{x},\bm{c})\le B$. Hence, $\bm{c}$ must be an output of the algorithm. 
\end{proof}

\subsubsection{Time Complexity}
Next, we discuss the time complexity of our algorithm. For each choice of $\bm{x}$, the hypergraph $G$ can be constructed in $n^{O(1)}$ time. The number of distinct hypergraphs $H_0$ with at most $B$ vertices and at most $200\ln B$ edges is $2^{O(B\log B)}$, since there are $2^B$ possibilities for each edge. Now we analyze the time needed for locating a particular $H_0$ in $G$. For any vector $\bm{y}\in B$, $d_H(\bm{y},\bm{c}) \le B$. By triangle inequality, $d_H(\bm{x},\bm{y})\le 2B$. Thus, the size of any edge in $H$ is at most $2B$, and we can remove any edge of $G$ of size more than $2B$. From Theorem \ref{th:hypergraph}, it follows that every occurrence of $H_0$ in $G$ can be found in $B^{O(B)}\cdot {(2B)}^{ {4B+1}}\cdot {n}^{5}\cdot { m }^2=B^{O(B)}n^{O(1)}m^2$ time. If $H_0$ appears at some place in $G$, it would take $O((B\vert \Sigma\vert)^{B})$ time to edit $\bm{x}$. Hence, in total the algorithm runs in $2^{O(B\log B)}{\vert\Sigma\vert}^{B}(nm)^{O(1)}$ time. 

Finally, Theorem~\ref{theorem:WCO} is proven. We restate it here for convenience.
\thmwco*

\section{Clustering with Row Outliers}
\label{sec:rowoutliers}
Here we show that \probClusteringWRO is a special case of \probconstrainedcl. 


\lemROsplcase*

\begin{proof}
Given an instance $I=(\bm{D},k,B,\ell)$ of \probClusteringWRO, we construct an instance $I'=(\bm{A},k',B',\ell',\R)$ of \probconstrainedcl such that $\bm{A}$ is the transpose of $\bm{D}$, $k'={\vert\Sigma\vert}^{k}$, $B'=B$ and $\ell'=\ell$. The set of constraints $\R$ is defined as follows. Let $S=\{s_1,s_2,\ldots,s_{{\vert\Sigma\vert}^k}\}$ be the lexicographic ordered collection of all ${\vert\Sigma\vert}^k$ strings of length $k$ on alphabet $\Sigma$. Also, let $Q$ be the ${\vert\Sigma\vert}^k\times k$ matrix such that the $i$-th row of $Q$ is the $i$-{th} string of $S$. Thus, each element of $Q$ is in $\Sigma$. For each column $\bm{q}^j$ of $Q$ with $1\le j\le k$, we construct a tuple $z_j=(\bm{q}^j[1],\bm{q}^j[2],\ldots,\bm{q}^j[{\vert\Sigma\vert}^k])$. Let $R$ be the set of these tuples. To construct $\R$ we set $R_i=R$ for all $i$. This finishes our construction.  

Next, we prove that $I$ is a \yesinstance of \probClusteringWRO if and only if $I'$ is a \yesinstance of \probconstrainedcl. The lemma follows immediately. 

First, suppose $I'$ is a \yesinstance of \probconstrainedcl. Let $X=\{X_1,X_2,\ldots$ $,X_{{\vert\Sigma\vert}^{k}}\}$ be a feasible clustering of the columns of $\bm{A}\setminus O$, the matrix $\bm{A}$ after removing the columns in $O$, where $O$ is the set of outliers. Also, let $\bm{c}_1,\bm{c}_2,\ldots,\bm{c}_{{\vert\Sigma\vert}^k}$ be the corresponding centers. Note that as this clustering satisfies the constraints in $\R$, each tuple $(\bm{c}_1[j],\bm{c}_2[j],\ldots,\bm{c}_{{\vert\Sigma\vert}^k}[j])$ is one of the $k$ tuples of $R$. Let $Y=\{Y_1,Y_2,\ldots,Y_{k}\}$ be the  partition of the rows of $\bm{A}\setminus O$, such that $Y_t$ contains all the indices $j$ such that $(\bm{c}_1[j],\bm{c}_2[j],\ldots,\bm{c}_{{\vert\Sigma\vert}^k}[j])=z_t$ for $1\le t\le k$. We reorder the rows of matrix $\bm{A}\setminus O$ based on this partition. In particular, we modify the row numbers in a way so that the rows in $Y_1$ become the first $\vert Y_1\vert$ rows, the rows in $Y_2$ become the next $\vert Y_2\vert$ rows, and so on (see Figure \ref{fig:matrix}). Now, let us go back to the clustering $\{X_1,X_2,\ldots,X_{{\vert\Sigma\vert}^k}\}$. Note that this is still a clustering of the new matrix $\bm{A}\setminus O$ with cost at most $B$ with rows of the centers reordered in the same way. Indeed, by construction, the reordering of the centers ensures that the center of $X_i$ is 
the column vector $(z_1[i],\ldots \vert Y_1\vert$ times $,z_2[i],\ldots \vert Y_2\vert$ times $,\ldots, z_k[i],\ldots \vert Y_k\vert$ times$)$ (see Figure \ref{fig:matrix}). Now, note that $\{Y_1,Y_2,\ldots,Y_{k}\}$ is a clustering of the columns in $\bm{D}^{-O}$. We claim that its cost is at most $B$. As $\vert O\vert\le l$, it follows that $I$ is a \yesinstance of \probClusteringWRO. Note that some clusters might be empty. But, as they do not incur any cost, we keep them for simplicity. To analyze the cost, set the center of $Y_j$ to the column vector $(z_j[1],\ldots \vert X_1\vert$ times $,z_j[2],\ldots \vert X_2\vert$ times $,\ldots, z_j[{\vert\Sigma\vert}^k],\ldots \vert X_{{\vert\Sigma\vert}^k}\vert$ times$)$ (see Figure \ref{fig:matrix}). Now, because of the symmetry between the two clusterings $X$ and $Y$, their costs are the same. Hence, the claim follows. 

Now, suppose $I$ is a \yesinstance of \probClusteringWRO and consider a feasible clustering $Y=\{Y_1,Y_2,\ldots,Y_{k}\}$. Let $O$ be the set of indices of the outliers. Also, let $\bm{c}_1,\bm{c}_2,\ldots,\bm{c}_k$ be the corresponding centers. Note that each string of the form $\bm{c}_1[j]\bm{c}_2[j]\ldots\bm{c}_k[j]\in S$. For a string $s \in S$, we say row $j$ is of type $s$ if $\bm{c}_1[j]\bm{c}_2[j]\ldots\bm{c}_k[j]=s$. Let $\{X_1,X_2,\ldots,X_{{\vert\Sigma\vert}^k}\}$ be the partition of the rows of $\bm{D}^{-O}$ based on their types. In particular, for the $i$-th string $s\in S$, $X_i$ contains all rows of type $s$. Note that $\{X_1,X_2,\ldots,X_{{\vert\Sigma\vert}^k}\}$ is also a clustering of the columns of $\bm{A}\setminus O$. We show that the cost of this clustering is at most $B$. To analyze the cost, choose the column vector $(s_i[1],\ldots \vert Y_1\vert$ times $,s_i[2],\ldots \vert Y_2\vert$ times $,\ldots, s_i[k],\ldots \vert Y_k\vert$ times$)$ as the center vector of the cluster $X_i$ of the columns of $\bm{A}\setminus O$. Note that the centers chosen in this way satisfy the constraints in $\R$, as the center tuple $(s_1[j],s_2[j],\ldots,s_{{\vert\Sigma\vert}^k}[j])=(\bm{q}^j[1],\bm{q}^j[2],\ldots,\bm{q}^j[{\vert\Sigma\vert}^k])$ corresponding to index $j$ is in $R$. It follows that the cost of this clustering is at most $B$, as it is induced by the clustering $Y$ and the centers $\bm{c}_1,\bm{c}_2,\ldots,\bm{c}_k$. 
\end{proof}

\begin{figure}[ht]
\centering
\includegraphics[scale=.7]{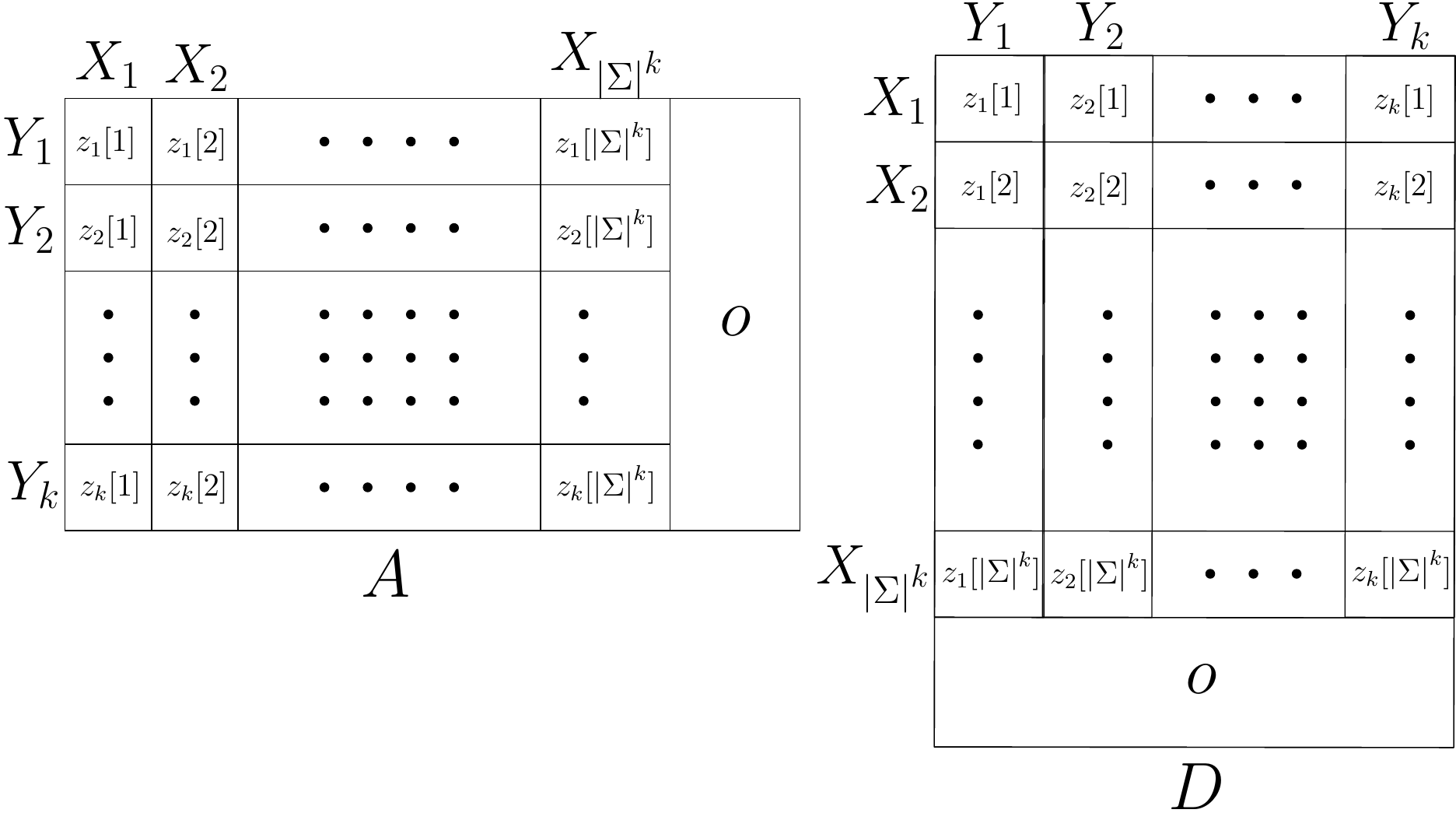}
\caption{Figure showing the clustering $X$ of $\bm{A}\setminus O$ and $Y$ of $\bm{D}^{-O}$. The common center coordinate values are shown inside submatrices.}
\label{fig:matrix}
\end{figure}

Lemma~\ref{lem:ROsplcase} together with Theorem~\ref{theorem:main} immediately give us Theorem~\ref{theorem:clustering_outliers} with $f(k,B,\vert \Sigma\vert)= (k'B)^{O(k'B)}{\vert \Sigma\vert}^{k'B}$ and $g(k,\vert \Sigma\vert)=O(k')$, where $k'={\vert \Sigma\vert}^{k}$. The theorem is restated here.
\thmwro*

\section{Lower bounds}\label{sec:hardness}

In this section we present the lower bound results. First we recall a few complexity notions essential to our hardness proofs.

 A \emph{parameterized problem} is a language $Q\subseteq \Sigma^*\times\mathbb{N}$ where $\Sigma^*$ is the set of strings over a finite alphabet $\Sigma$. Respectively, an input  of $Q$ is a pair $(I,k)$ where $I\subseteq \Sigma^*$ and $k\in\mathbb{N}$; $k$ is the \emph{parameter} of  the problem. 
A parameterized problem $Q$ is \emph{fixed-parameter tractable} (\classFPT) if it can be decided whether $(I,k)\in Q$ in time $f(k)\cdot|I|^{\Oh(1)}$ for some function $f$ that depends of the parameter $k$ only. Respectively, the parameterized complexity class \classFPT is composed by  fixed-parameter tractable problems.
The $\operatorClassW$-hierarchy is a collection of computational complexity classes: we omit the technical
definitions here. The following relation is known amongst the classes in the $\operatorClassW$-hierarchy:
$\classFPT=\classW{0}\subseteq \classW{1}\subseteq \classW{2}\subseteq \ldots \subseteq \classW{P}$. It is widely believed that $\classFPT\neq \classW{1}$, and hence if a
problem is hard for the class $\classW{i}$ (for any $i\geq 1$) then it is considered to be fixed-parameter intractable.
We refer to   books \cite{CyganFKLMPPS15,DowneyF13} for the detailed introduction to parameterized complexity.  

We also make use of  the following complexity hypothesis formulated by  Impagliazzo, Paturi, and Zane   \cite{ImpagliazzoPZ01}.
 
\begin{quote}
\textbf{Exponential Time Hypothesis (ETH)}:  There is a positive real $s$ such that 3-CNF-SAT with $n$ variables and $m$ clauses cannot be solved in time $2^{sn}(n+m)^{\Oh(1)}$.
\end{quote}

Our lower bounds hold for the \probClusteringWRO problem, and even for its special case where $B = 0$. This case may be viewed upon as the problem of partitioning columns of the input matrix into clusters where columns in each cluster are identical, except for the outlier rows. We show that this exact version of \probClusteringWRO is W[1]-hard when parameterized by the number of clusters $k$ and the number of inliers $(n - \ell)$, and also W[1]-hard when parameterized by the number of outliers $\ell$.

The first lower bound shows that the result in Theorem~\ref{theorem:clustering_outliers} is tight in the sense that $k$ must be a constant. This is in contrast to the \probClusteringWCO problem, for which we show the FPT algorithm for the stronger parameterization (Theorem~\ref{theorem:WCO}). Thus, \probClusteringWRO is fundamentally more difficult than \probClusteringWCO. Also, note that \probClusteringWCO is trivially solvable in polynomial time when the cost $B$ is zero.

\begin{lemma}
    \probClusteringWRO is W[1]-hard when parameterized by $k + (m - \ell)$ when $B = 0$ and $\Sigma=\{0,1\}$. Assuming ETH, there is no algorithm solving the problem with $B = 0$ and the binary alphabet in time $m^{o(k)}\cdot n^{\Oh(1)}$.
    \label{lemma:w1r}
\end{lemma}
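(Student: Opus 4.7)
I intend to reduce from \probIS, which is $\classW{1}$-hard parameterized by the solution size $t$ and, under ETH, admits no algorithm of running time $f(t)\cdot |V(G)|^{o(t)}$. Given an instance $(G, t)$ of \probIS with $V(G) = \{v_1, \ldots, v_n\}$, the construction adds $t$ fresh vertices $u_1,\ldots, u_t$ forming a clique in which each $u_j$ is also adjacent to every $v_i$, producing an auxiliary graph $G'$ on $n + t$ vertices. The target instance of \probClusteringWRO is then $(\bm{A}, k, B, \ell)$, where $\bm{A}$ is the vertex-edge incidence matrix of $G'$, $k = t + 1$, $\ell = n$, $B = 0$, and $\Sigma = \{0, 1\}$. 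This gives $m = n + t$, $m - \ell = t$, and the parameter $k + (m - \ell) = 2t + 1$ is linear in $t$, so $\classW{1}$-hardness in $t$ transfers. Moreover, an $m^{o(k)}\cdot n^{\Oh(1)}$ algorithm for \probClusteringWRO would, via this reduction, decide \probIS in time $(n+t)^{o(t+1)}\cdot n^{\Oh(1)} = n^{o(t)}\cdot n^{\Oh(1)}$, contradicting ETH.

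The central structural claim I plan to prove is that for every $S \subseteq V(G')$ with $|S| = t$, the submatrix of $\bm{A}$ obtained by keeping exactly the rows in $S$ has exactly $1 + t + e(S)$ distinct columns, where $e(S)$ denotes the number of edges of $G'$ whose both endpoints lie in $S$. The zero column always appears, because $V(G')\setminus S$ spans some edge: if at least two of the $u_j$'s lie outside $S$, they span a $u$-$u$ edge; otherwise $S$ contains at least $t-1$ of the $u_j$'s, so at most one $u_j$ and $n$ original vertices lie in $V(G')\setminus S$, and some $\{u_j, v_i\}$ with both outside $S$ exists. For each $v \in S$, a column with a single $1$ at row $v$ also appears: if $v = v_i$, then $v_i$ has all $t$ of the $u_j$'s as neighbors and at most $t-1$ of them can be in $S$, so some edge $\{v_i, u_j\}$ with $u_j\notin S$ provides it; if $v = u_j$, then all of $V(G)$ are neighbors and at most $t-1$ of them are in $S$. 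These $t$ singletons are pairwise distinct and distinct from the zero column. Finally, each edge inside $S$ yields a distinct column with exactly two ones, contributing the $e(S)$ further types.

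From the claim, $(\bm{A}, k, B, \ell)$ is a yes-instance iff some $t$-subset $S$ satisfies $e(S) = 0$, i.e., is an independent set of size $t$ in $G'$. To finish, I observe that independent sets of size $t \ge 2$ in $G'$ are precisely independent sets of size $t$ in $G$: since $\{u_1, \ldots, u_t\}$ induces a clique in $G'$ with each $u_j$ adjacent to every $v_i$, any independent set of $G'$ containing some $u_j$ has size at most $1$, so for $t \ge 2$ every size-$t$ independent set of $G'$ lies entirely inside $V(G)$, where the induced subgraph is exactly $G$ (the added edges do not touch $V(G)\times V(G)$); the case $t = 1$ is trivial, as any single-row $0/1$ matrix has at most $2 = k$ distinct columns. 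The main delicate point will be the uniform verification of the zero-column and singleton-column conditions across \emph{all} $t$-subsets $S$, in particular those that are heavily concentrated on the $u_j$'s, so that the count $1+t+e(S)$ is tight and the reduction is exact.
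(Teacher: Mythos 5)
Your proposal is essentially the paper's own proof: a reduction from \probIS via the vertex--edge incidence matrix, padded with a universal clique so that the zero column and all $t$ single-one columns are forced to appear, making the distinct-column count $1+t+e(S)$ and hence $\le k=t+1$ exactly for independent sets. The one difference is that the paper attaches a clique of size $t+2$ rather than $t$, which guarantees an edge inside $V'\setminus S$ for \emph{every} $t$-subset $S$ and thus eliminates the corner case you flag ($S$ consisting of all padding vertices with $G$ edgeless, where the zero column is absent and your exact count fails); your version still yields a correct reduction since that case only arises when $G$ is trivially a yes-instance, but either patching in that observation or simply enlarging the clique to $t+2$ closes the gap cleanly.
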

\begin{proof}
        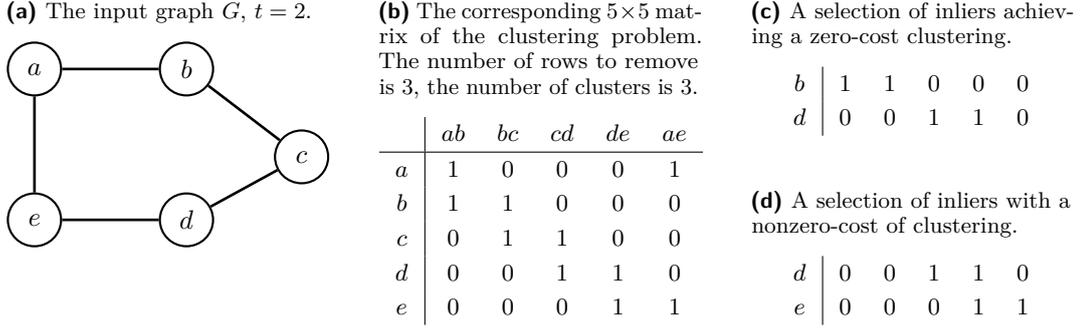
\begin{figure}[ht]
            \centering
            \tabskip=0pt
            \valign{#\cr
                \hbox{%
            \begin{subfigure}[b]{.3\textwidth}
                \centering
                \caption{The input graph $G$, $t = 2$.}
                \begin{tikzpicture}[auto,  node distance=2cm,every loop/.style={},thick,main node/.style={circle,draw, minimum size=.7cm}]
                    \node[main node] (1) {$a$};
                    \node[main node] (2) [right of=1] {$b$};
                    \node[main node] (3) [below right=.65cm and 1cm of 2] {$c$};
                    \node[main node] (4) [below of =2]{$d$};
                    \node[main node] (5) [below of=1] {$e$};
                    \draw[line width=1pt]
                    (1) --  (2);
                    \draw[line width=1pt]
                    (2) -- (3);
                    \draw[line width=1pt]
                    (3) -- (4);
                    \draw[line width=1pt]
                    (4) -- (5);
                    \draw[line width=1pt]
                    (5) -- (1);
                \end{tikzpicture}
            \end{subfigure}%
        }\cr
        \noalign{\hfill}
        \hbox{%
            \begin{subfigure}[b]{.3\textwidth}
                \centering
                \caption{The corresponding $5 \times 5$ matrix of the clustering problem. The number of rows to remove is $3$, the number of clusters is $3$.}
                \begin{tabular}{c | c c c c c}
                    &$ab$&$bc$&$cd$&$de$&$ae$\\\hline
                    $a$&1&0&0&0&1\\
                    $b$&1&1&0&0&0\\
                    $c$&0&1&1&0&0\\
                    $d$&0&0&1&1&0\\
                    $e$&0&0&0&1&1\\
                \end{tabular}
            \end{subfigure}%
        }\cr
        \noalign{\hfill}
        \hbox{%
            \begin{subfigure}[b]{.3\textwidth}
                \centering
                \caption{A selection of inliers achieving a zero-cost clustering.}
                \begin{tabular}{c | c c c c c}
                    $b$&1&1&0&0&0\\
                    $d$&0&0&1&1&0\\
                \end{tabular}
            \end{subfigure}%
        }\vfill
        \hbox{%

            \begin{subfigure}[b]{.3\textwidth}
                \centering
                \caption{A selection of inliers with a nonzero-cost of clustering.}
                \begin{tabular}{c | c c c c c}
                    $d$&0&0&1&1&0\\
                    $e$&0&0&0&1&1\\
                \end{tabular}
            \end{subfigure}%
        }\cr
    }
            \caption{An illustration of the reduction in Theorem~\ref{thm:w1r} showing a possible input graph, the obtained matrix, and two possible selections of the inliers.}
            \label{fig:w1r}
        \end{figure}

        We show a reduction from the \probIS problem. Given a graph $G(V, E)$ and a parameter $t$, the task in \probIS is to determine whether there is an independent set of size at least $t$ in $G$. Consider an instance $(G(V, E), t)$ of \probIS. Assume that for any $t$-sized subset $S$ of $V$, each of $s$ in $S$ is adjacent to a vertex in $V \setminus S$, and there is an edge inside $V \setminus S$. Any graph $G$ can be tweaked to meet this assumption by adding a $(t + 2)$-sized clique $C$ to the graph and connecting each vertex of $G$ to each vertex of $C$. Note that no vertex of $C$ can be in any independent set of size at least two, and any independent set of the original graph is present in the newly constructed graph. Thus, the new instance of \probIS is equivalent to the original one, and it satisfies the assumption stated.
    
    From $(G, t)$, we construct an instance of \probClusteringWRO over the alphabet $\Sigma = \{0, 1\}$. The input matrix $\bm{A}$ is the incidence matrix of $G$ where rows are indexed by vertices and columns by edges. Set $k = t + 1$, $\ell = |V| - t$, $B = 0$. See Figure~\ref{fig:w1r} for an example.

    Now it remains to show the correctness of the reduction. Assume there is an independent set $I$ of size $t$ in $G$. Let the inliers be the rows corresponding to $I$, and $O$ be the set of the remaining outlier rows.
    Restricted to the inlier rows, every column of $\bm{A}$ is either a zero column, or has a single one, since no edge of $G$ has both ends in $I$.
    Thus, there are at most $t + 1$ distinct columns in $\bm{A}^{-O}$, so there is a zero-cost $(t + 1)$-clustering. The centroids in this clustering are the following $t$-columns: the zero column and all the $t$ possible columns with a single one.

    For the other direction of the correctness proof, assume there is a solution to the constructed \probClusteringWRO instance. Consider the set $I$ of the vertices corresponding to the inlier rows, and the set $O$ of the outlier rows.
    By the assumption on $G$, every vertex in $I$ is adjacent to a vertex in $V \setminus I$, and there is an edge inside $V \setminus I$. Thus, $\bm{A}^{-O}$ has a zero column and all $t$ distinct columns with a single one. If there is an edge $e$ inside $I$, then there are at least $t + 2$ distinct columns in the matrix, since even after removing the rows of $O$, the column corresponding to $e$ has two ones. In this case no zero-cost clustering with $t + 1$ clusters is possible. Thus, $I$ must be an independent set.

    Thus, the reduction is valid. It is well-known that \probIS is W[1]-hard, and also not solvable in time $f(t) \cdot |V|^{o(t)}$ for any computable function $f$ unless ETH fails.
    Observing that $k = t + 1$ and $m - \ell = t$ concludes the proof of the lemma.
\end{proof}

Since Lemma~\ref{lem:ROsplcase} gives a parameterized reduction from \probClusteringWRO to \probconstrainedcl, we immediately get the following corollary.
\begin{corollary}
    \probconstrainedcl is W[1]-hard when parameterized by $k + (n - l)$ when $B = 0$ and $\Sigma=\{0,1\}$.
\end{corollary}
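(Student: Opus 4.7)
The plan is to simply invoke the reduction constructed in Lemma~\ref{lem:ROsplcase} together with the W[1]-hardness of \probClusteringWRO established in Lemma~\ref{lemma:w1r}, and check that the reduction is a valid parameterized reduction for the stated parameterization. Specifically, given an instance $(\bm{D}, k, B, \ell)$ of \probClusteringWRO with $\bm{D}$ of size $m \times n$, $B = 0$, and $\Sigma = \{0, 1\}$, Lemma~\ref{lem:ROsplcase} produces in polynomial time an equivalent instance $(\bm{A}, k', B', \ell', \R)$ of \probconstrainedcl where $\bm{A}$ is the transpose of $\bm{D}$ (hence of size $n \times m$ when viewed in the $m' \times n'$ convention of \probconstrainedcl, so $m' = n$ and $n' = m$), $k' = |\Sigma|^k = 2^k$, $B' = B = 0$, and $\ell' = \ell$.

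The first thing to verify is the effect on the parameter. In the \probconstrainedcl formulation, the parameter is $k' + (n' - \ell')$, which here equals $2^k + (m - \ell)$. This is bounded by $2^{k + (m - \ell)}$, i.e., by a computable function of the parameter $k + (m - \ell)$ of the source instance. Hence the mapping is a parameterized reduction in the standard sense: it runs in polynomial time, preserves the yes/no answer, and bounds the target parameter by a function of the source parameter. Moreover, the alphabet remains binary and $B$ remains $0$, matching the required restrictions.

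Combining this with Lemma~\ref{lemma:w1r}, which states that \probClusteringWRO is W[1]-hard under parameter $k + (m - \ell)$ already under the restrictions $B = 0$, $\Sigma = \{0, 1\}$, yields the claimed W[1]-hardness of \probconstrainedcl under parameter $k + (n - \ell)$ with $B = 0$ and $\Sigma = \{0, 1\}$. There is essentially no obstacle here: the only point one has to be slightly careful about is that the transposition swaps the roles of rows and columns, so that the ``inlier count'' $m - \ell$ of the row-outlier source problem corresponds precisely to the ``inlier count'' $n' - \ell'$ of the column-outlier target problem, which is exactly what the parameterization of the corollary asks for. Once this bookkeeping is tracked, the corollary is immediate.
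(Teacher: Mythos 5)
Your proposal is correct and matches the paper's argument exactly: the paper likewise derives this corollary by observing that Lemma~\ref{lem:ROsplcase} is a parameterized reduction from \probClusteringWRO (with $k' = 2^k$, $B' = B = 0$, $\ell' = \ell$, and the inlier count preserved under transposition) and composing it with the W[1]-hardness in Lemma~\ref{lemma:w1r}. Your bookkeeping of the parameter blow-up $k' + (n'-\ell') = 2^k + (m-\ell)$ being a computable function of $k + (m-\ell)$ is precisely the point the paper leaves implicit.
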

We do not get the analogous ETH bound however, as the number of clusters in the constructed instance of \probconstrainedcl is exponential.

The second lower bound shows that bounding both the number of outliers and the cost $B$ is insufficient for an FPT algorithm as well.

\begin{lemma}
    \probClusteringWRO is W[1]-hard when parameterized by $\ell$ when $B = 0$ and $\Sigma=\{0,1\}$. Assuming ETH, there is no algorithm solving the problem with $B = 0$ and the binary alphabet in time $m^{o(l)}\cdot n^{\Oh(1)}$.
    \label{lemma:w1l}
\end{lemma}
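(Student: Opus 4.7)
I would reduce from \probPVC: decide whether $G(V,E)$ admits $S\subseteq V$ with $|S|=t$ that covers at least $s$ edges. \probPVC is W[1]-hard parameterized by $t$ (take $s=\binom{t}{2}$ to recover \probClique) and, under ETH, cannot be solved in $f(t)\cdot|V|^{o(t)}$ time. The plan is to set $\ell:=t$, $B:=0$, $\Sigma:=\{0,1\}$ and let $\bm{A}$ be the incidence matrix of a preprocessed graph $G'$ whose vertex set has size polynomial in $|V|$, so the parameter $\ell$ inherits the full hardness of $t$ and the $m^{o(\ell)}\cdot n^{O(1)}$ ETH lower bound follows automatically.

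The identity underlying the analysis is the same one used implicitly in Lemma~\ref{lemma:w1r}. For any outlier set $O\subseteq V(G')$ with $|O|=t$, writing $I:=V(G')\setminus O$, the number of distinct columns of $\bm{A}^{-O}$ equals
\[
|E_{G'}[I]|\;+\;|N_{G'}(O)\setminus O|\;+\;[E_{G'}[O]\neq\emptyset],
\]
because edges inside $I$ remain pairwise-distinct two-one columns, each edge between $O$ and $I$ collapses to the single-one column indexed by its $I$-endpoint, and all edges inside $O$ merge into the zero column. Since $|E_{G'}[I]|=|E(G')|-c_{G'}(O)$ with $c_{G'}(O)$ the number of $G'$-edges covered by $O$, minimising the distinct-column count over $|O|=t$ is essentially the same as maximising $c_{G'}(O)$, modulo the two secondary terms.

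The preprocessing is chosen to neutralise these secondary terms. I would attach to $G$ a universally connected clique $K_{t+2}$ (making $[E_{G'}[O]\neq\emptyset]=1$ for every non-empty $O$ and, by an exchange argument against the large clique, forcing any optimum $O$ to sit inside $V(G)$), together with a polynomial number of private pendants per original vertex so that the pendant contribution to the distinct-column count becomes a fixed quantity for every $O\subseteq V(G)$ with $|O|=t$. A mild further graph modification---for instance, adding enough common-neighbour gadgets---equalises $|N_{G'}(O)\setminus O|$ across every such $O$. Collecting these fixed offsets into a single constant $C$, the distinct-column count reduces to $|E(G)|-c_G(O)+C$, and choosing $k:=|E(G)|-s+C$ yields the sought equivalence: there exist $t$ outlier rows producing at most $k$ distinct columns if and only if there is $S\subseteq V$ with $|S|=t$ covering at least $s$ edges of $G$.

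The main obstacle is the preprocessing step itself, which is more delicate than in Lemma~\ref{lemma:w1r}: there a single attached clique sufficed to enforce passive properties of the $t$ kept rows, whereas here we must actively forbid the optimum outlier set from poaching gadget vertices and simultaneously equalise $|N_{G'}(O)\setminus O|$ over every $O\subseteq V(G)$ of size $t$. Both requirements can be met by gadgets polynomial in $|V|$ together with a standard exchange argument, but the counting and case analysis are precisely the source of the extra technicality that the introduction foreshadows.
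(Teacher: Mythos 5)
Your choice of source problem (\probPVC) and your counting identity for the number of distinct columns of $\bm{A}^{-O}$ are both correct and match the paper's proof. The gap is in the gadget design that is supposed to neutralise the two secondary terms. First, a universally connected clique does \emph{not} make $[E_{G'}[O]\neq\emptyset]=1$ for every non-empty $O$: if your exchange argument succeeds in pushing $O$ entirely into $V(G)$ and $O$ happens to be independent in $G$, then $E_{G'}[O]$ is still empty, since the new clique creates no edges between original vertices. Second, and more seriously, the term $|N_{G'}(O)\setminus O|$ cannot be equalised over all $t$-subsets $O\subseteq V(G)$ by adding gadget vertices: it decomposes as the gadget neighbours of $O$ plus $|N_G(O)\setminus O|$, and the latter summand lives entirely inside $G$ and genuinely varies with $O$ (from $0$ up to $|V|-t$); no ``common-neighbour gadget'' outside $V(G)$ can cancel it. Since this term and the covered-edge count $c_G(O)$ are of the same order, you cannot dismiss it as a bounded offset either, so the equivalence ``at most $k$ distinct columns iff $c_G(O)\ge s$'' does not follow from your construction.

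The paper resolves exactly this difficulty by doing the opposite of what you propose: it adds two universal vertices $P$ and forces them \emph{into} the outlier set, using a padding set $D$ of $5+t+|E|-q$ extra vertices that makes the number of distinct two-one columns exceed $k$ whenever $P\not\subseteq O$. With $P\subseteq O$, both secondary terms are pinned at once: the edge inside $P$ yields $[E_{G'}[O]\neq\emptyset]=1$, and every surviving vertex is adjacent to $P\subseteq O$, so $|N_{G'}(O)\setminus O|=|V'|-|O|$ exactly. You would need to redirect your exchange argument accordingly. A minor additional point: your parenthetical justification that \probPVC is W[1]-hard (``take $s=\binom{t}{2}$ to recover \probClique'') is incorrect --- covering counts edges with at least one endpoint in $S$, so, e.g., any $t$ vertices of a large star cover far more than $\binom{t}{2}$ edges without containing a clique; the standard hardness of \probPVC parameterized by $t$ is via a parameter-preserving reduction from \probIS, which is what the paper cites.
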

\begin{proof}
    We show a reduction from the \probPVC problem.
    The input to \probPVC is a graph $G(V, E)$ and numbers $t$, $q$. The problem is to decide whether there is a subset $C \subset V$ of size at most $t$ such that at least $q$ edges of $G$ are covered by $C$.
    The \probPVC problem is well-known to be as hard as \probIS (\cite{CyganFKLMPPS15}, Theorem 13.6). That is, when \probPVC is parameterized by $t$, there is a parameter-preserving reduction from \probIS.

    Consider an instance $(G, t, q)$ of \probPVC. The idea of the reduction is similar to that in Theorem~\ref{thm:w1r}. In short, we argue that choosing $t$ vertices in $G$ in the way that maximizes the number of covered edges corresponds to removing $t$ rows in the incidence matrix of $G$ in the way that minimizes the number of distinct columns, up to a certain assumption on $G$. Next, we show that formally.

    First, we modify $G$ to obtain a new graph $G'(V', E')$. Let $P$ be a set of two new vertices, and $D$ be a set of $d$ new vertices, where $d = 5 + t + |E| - q$.
    Set $V'$ to $V \cup P \cup D$. Keep all edges of $G$ on $V$, and connect each vertex of $P$ to all other vertices of $G'$, including the other vertex of $P$. That is,
    \[E' = E \cup \left\{pv\ |\ p \in P, v \in V'\setminus\{p\}\right\}.\]
    The total number of edges in $G'$ is $|E| + 2 \cdot |V'| - 3$.

    The graph $G'$ behaves in the same way as $G$ with respect to \probPVC, as stated in the following claim. Set $t' = t + 2$ and $q' = q + 2 \cdot|V'| - 3$.
    \begin{claim}
        The instance $(G, t, q)$ is a yes-instance of \probPVC if and only if $(G', t', q')$ is a yes-instance of \probPVC.
    \end{claim}
    \begin{proof}
        First, assume there is a subset $C \subset V$ of size at most $t$ covering at least $q$ edges of $G$. Set $C' = C \cup P$, $|C'| \le t + 2 = t'$, and vertices of $P$ cover the additional $2 \cdot |V'| - 3$ edges which are not present in $G$.

        To the other direction, assume there is a subset $C' \subset V'$ of size at most $t'$ covering at least $q'$ edges of $G'$. We may assume that $P \subset C'$, otherwise we may replace any vertex of $C$ by a vertex of $P$, and the number of covered edges does not decrease since vertices in $P$ are adjacent to all vertices. Now, $C = C' \cap V$ is a solution for $(G, t, q)$: $|C| \le |C' \setminus P| \le t$, and since
        $|E' \setminus E| = q' - q$, $C$ must cover at least $q$ edges in $G$.
    \end{proof}

    Now we construct the input instance $(\bm{A}, k, B, \ell)$ of \probClusteringWRO over the alphabet $\Sigma = \{0, 1\}$.
    The input matrix $\bm{A}$ is the incidence matrix of $G'$ where rows are indexed by vertices and columns by edges. Set $k = 1 + |V'| - t' + |E'| - q'$, $\ell = t'$, $B = 0$.

    To show the correctness of the reduction, first assume there is a set $C \subset V'$ of size $t'$ covering at least $q'$ edges in $G'$. Let $O$ be the set of rows in $\bm{A}$ corresponding to $C$. We claim that there are at most $1 + (|V'| - t') + (|E'| - q')$ distinct columns in $\bm{A}^{-O}$. That is, at most one zero column, at most $|V'| - t'$ columns with a single one corresponding to the vertices of $V' \setminus C$ adjacent to $C$, and at most $|E'| - q'$ columns with two ones corresponding to the edges not covered by $C$. Since there are at most $k$ distinct columns, the cost of $k$-clustering is zero if we remove the rows of $O$.

    In the other direction, assume there is a solution to the constructed \probClusteringWRO instance, that is, a set $O$ of at most $t'$ rows such that $\bm{A}^{-O}$ has at most $k$ distinct columns. Consider the set $C$ of vertices corresponding to the outlier rows $O$.
First, we show that $C$ must contain $P$. If this does not hold, either $C \cap P$ is empty, or $|C \cap P| = 1$. Assume $C \cap P$ is empty, in this case out of $2 \cdot |V'| - 3$ edges incident to $P$ at most $2t'$ are covered by vertices of $C$. These edges correspond to at least $2 \cdot |V'| - 3 - 2t'$ columns of $\bm{A}$ that have still two ones each in $\bm{A}^{-O}$, and thus they are all distinct. However, the number of clusters is smaller than the number of such columns, since
    \[2 \cdot |V'| - 3 - 2 t' = |V'| - 2t' - 1 + |V| + d > 1 + |V'| - t' + |E'| - q' = k,\]
    as $d = 3 + t' + |E| - q$ and $|E| - q = |E'| - q'$. This is a contradiction to $O$ being a solution.

    In the other case $|C \cap P| = 1$, denote by $p$ the vertex of $P$ that is in $C$, and by $p'$ the other vertex of $P$. At most $t'$ vertices in $V' \setminus \{p\}$ are in $C$, thus there are at least $|V'| - 1 - t'$ distinct columns in $\bm{A}^{-O}$ that correspond to edges from $p$ to $V' \setminus C$, they each have a single one in the row corresponding to the second endpoint.
    There are also at least $|V'| - 2 - t'$ distinct columns in $\bm{A}^{-O}$ that correspond to edges from $p'$ to $V' \setminus C$, they each have two ones in the rows corresponding to $p'$ and the second endpoint. Thus, there are at least $2 \cdot |V'| - 3 - 2t'$ distinct columns in $\bm{A}^{-O}$, and we have already shown that this quantity is strictly larger than $k$, leading to the contradiction.

    Now that we have shown $P \subset C$, we argue that all possible columns with less than two ones are present in $\bm{A}^{-O}$, giving an exact bound on the number of distinct columns with two ones.
    Formally, since $P \subset C$, there is a zero column in $\bm{A}^{-O}$ corresponding to the edge inside $P$. Take any $p \in P$, since $p \in C$ and any vertex $v$ in $|V'| \setminus C$ is adjacent to $p$,
    there is a column in $\bm{A}^{-O}$ corresponding to the edge $vp$ that has a single one in the row corresponding to $v$.
    This gives $|V'| - t'$ distinct columns with single ones, one for each vertex of $V' \setminus C$. Thus, there must be at most $k - 1 - (|V'| - t') = |E'| - q'$ distinct columns with two ones in $\bm{A}^{-O}$. These columns correspond exactly to the uncovered by $C$ edges of $G'$. So $C$ covers at least $q'$ edges, and $(G', t', q')$ is a yes-instance. This finishes the correctness proof, and from the hardness of \probPVC the lemma follows.
\end{proof}
 
Clearly, Theorem~\ref{thm:w1r}, restated next for convenience, follows from Lemma~\ref{lemma:w1r} together with Lemma~\ref{lemma:w1l}.
\thmhardness*

\fi
 

\section{Conclusion}\label{sec:concl}

We initiated the systematic study of parameterized complexity of robust categorical data clustering problems. In particular, for \probClusteringWCO, we proved that the problem can be solved in $2^{\Oh(B\log B)}|\Sigma|^B\cdot (nm)^{\Oh(1)}$ time. Further, we considered the case of row outliers and proved that \probClusteringWRO  is solvable  in time 
$f(k,B,|\Sigma|)\cdot m^{g(k,|\Sigma|)}n^2$. We also proved that  we cannot avoid the dependence on $k$ in the degree of the polynomial of the input size in the running time unless $\classW{1}=\classFPT$, and  the problem cannot be solved in $m^{o(k)}\cdot n^{\Oh(1)}$ time, unless ETH is false.  
%
%
To deal with row outliers, we introduced the \probconstrainedcl problem and obtained the algorithm with running time $(kB)^{\Oh(kB)}|\Sigma|^{kB}\cdot m^2n^{\Oh(k)}$. This problem is very general, and the algorithm for it not only allowed us to get the result for \probClusteringWRO, but also led to the algorithms   
for the robust low rank approximation problems. In particular, we obtained that  \probRLRMA is \classFPT if $k$ and $p$ are constants when the problem is parameterized by  $B$. However, even if the low rank approximation problems are closely related to the matrix clustering problems,
there are structural differences. For instance, we show that the complexity of clustering with column outliers is different from row outliers, however, low-rank approximation problems are symmetric.
This leads to the question whether  \probRLRMA,  \probRBRMA and  \probProjClR could be solved by better algorithms specially tailored for these problems. It is unlikely that potential improvements would considerably change the general qualitative picture. For example, \probRLRMA for $p=2$ and $\ell=0$ is \classNP-complete if $k=2$~\cite{DanHJWZ15,GillisV15} and \classW{1}-hard when parameterized by $B$~\cite{FominLMSZ18}. It is also easy to observe that  \probRLRMA for $p=2$, $B=0$ and $k=n-\ell-1$ is equivalent to asking whether the input matrix $\bm{A}$ has $n-\ell$ linearly dependent columns. This immediately implies that  \probRLRMA for $p=2$ and $B=0$ is \classW{1}-hard when parameterized by $k$ or $n-\ell$ by the recent results about the \textsc{Even Set} problem~\cite{BhattacharyyaBEGKBMM19}. 
The most interesting open question,  by our opinion,  is whether the exponential dependence on $k$ in the degree of the polynomial of the input size in the running time
produced by our reduction of  \probRLRMA to \probconstrainedcl 
could be avoided, even if $p$ is a constant. Can the dependence of $k$ be  made polynomial (or even linear)?

\bibliography{column-outliers}

\end{document}